\definecolor{light-gray}{gray}{0.9}
	\newcommand{\eg}{e.g.,\xspace}
	\newtheorem{lemma}{Lemma}%
		\newtheorem{remark}{Remark}%
	\newtheorem{theorem}{Theorem}%
	\newtheorem{corollary}{Corollary}%
	\newtheorem{example}{Example}
		\newtheorem{claim}{Claim}
			\newtheorem{fact}{Fact}
	\newcommand\eat[1]{}
	\newlength{\wordlength}
	\newcommand{\set}[1]{\{#1\}}
	\newcommand{\midd}{\mathbin{:}}
	\newcommand{\eqclass}[2][]{\ifthenelse{\equal{#1}{}}{[#2]}{[#2]_{\sim_{#1}}}}
\newsavebox{\ffbox}\newlength{\ffboxlen}
\newcommand{\todo}[1]{%
	\vskip4mm
	{\sbox{\ffbox}{\textbf{\color{red} TODO:}\ \textit{{#1}}\ \textbf{\color{red} :ODOT}}
    \settowidth{\ffboxlen}{\usebox{\ffbox}}
		\addtolength{\ffboxlen}{-5mm}
    \ifthenelse{\ffboxlen>\linewidth}{%
	\noindent\marginpar{$>>>>$}\textbf{\color{red} TODO:}\ \textit{{#1}}\ \textbf{\color{red} :ODOT}\marginpar{$<<<<$}}{%
	  \noindent\marginpar{$>><<$}\textbf{\color{red} TODO:}\ \textit{{#1}}\ \textbf{\color{red} :ODOT}}}
	\vskip4mm
  }
	\newcommand{\pref}{\succsim \xspace}
	\newcommand{\Pref}[1][]{
		\ifthenelse{\equal{#1}{}}{\mathrel R}{\mathop{R_{#1}}}
	}                           
	\newcommand{\spref}{\ensuremath{\succ}}
	\newcommand{\sPref}[1][]{                  
		\ifthenelse{\equal{#1}{}}{\mathrel P}{\mathop{P_{#1}}}
	}                                          
	\newcommand{\Indiff}[1][]{                 
		\ifthenelse{\equal{#1}{}}{\mathrel \sim}{\mathop{I_{#1}}}
	}
	\newcommand{\prefset}[1][]{\ifthenelse{\equal{#1}{}}{\mathcal{R}}{\mathcal{R}_{#1}}}
\newcommand{\nbh}[1][]{
	\ifthenelse{\equal{#1}{}}{\nu}{\nu(#1)}
}
\newcommand{\cstr}[1][]{
	\ifthenelse{\equal{#1}{}}{\mathscr S}{\cstr(#1)}
}
\newcommand{\choice}[1][]{
	\ifthenelse{\equal{#1}{}}{\mathit{C}}{\choice(#1)}
}
\tikzset{
  treenode/.style = {align=center, inner sep=0pt, text centered,
    font=\sffamily},
  arn_n/.style = {treenode, circle, white, font=\sffamily\bfseries, draw=black,
    fill=black, text width=1.5em},
  arn_r/.style = {treenode, circle, red, draw=red, 
    text width=1.5em, very thick},
  arn_x/.style = {treenode, rectangle, draw=black,
    minimum width=0.5em, minimum height=0.5em}
}
\newcommand{\mc}{\mathcal C}
\newcommand{\ra}{\rightarrow}
\newcommand{\Omit}[1]{}
\begin{document}

\title{Structure and complexity of \\ex post efficient random assignments}
	

 \author{Haris Aziz and Simon Mackenzie
 		}
 	 \address{NICTA and UNSW, Australia}
 	\author{Lirong Xia
 			}
		\address{Rensselaer Polytechnic Institute (RPI), USA}
	 
 	\author{Chun Ye
 			}
		\address{Columbia University, USA}
	
	


\begin{abstract}
	In the random assignment problem, objects are randomly assigned to agents keeping in view the agents' preferences over objects. 
	A random assignment specifies the probability of an agent getting an object. We examine the structural and computational aspects of ex post efficiency of random assignments.  We first show that whereas an ex post efficient assignment can be computed easily, checking whether a given random assignment is ex post efficient is NP-complete.
  		Hence implementing a given random assignment via deterministic Pareto optimal assignments is NP-hard.
	We then formalize another concept of efficiency called robust ex post efficiency that is weaker than stochastic dominance efficiency but stronger than ex post efficiency. We present a characterization of robust ex post efficiency  and show that it can be tested in polynomial time if there are a constant number of agent types.
	It is shown that the well-known random serial dictatorship rule is not robust ex post efficient. 
	Finally, we show that whereas robust ex post efficiency depends solely on which entries of the assignment matrix are zero/non-zero, ex post efficiency of an assignment depends on the actual values.

\end{abstract}

	\begin{keyword}
	 	Random assignment
		\sep efficiency\\
		
		\emph{JEL}: C62, C63, and C78
	\end{keyword}

\maketitle


Pareto optimality has been termed the \emph{``single most important tool of normative economic analysis''}~\citep{Moul03a}. 
It appeals to the idea that there should not exist another possible outcome different from the social outcome which all the agents prefer. 
We consider Pareto optimality in the \emph{random assignment problem} which is a fundamental and widely applicable setting in computer science and economics~\citep[see \eg][]{Sven99a, BoMo01a, KaSe06a,BCK11a,BCKM12a,Aziz14d}.

In a \emph{random assignment problem} $(N,O,\pref)$, there is a set of agents $N=\{1,\ldots, n\}$, a set of objects $O=\{o_1,\ldots, o_n\}$, and a preference profile $\pref=(\pref_1,\ldots, \pref_n)$ that specifies for each agent $i\in N$ his \textit{strict} preferences over objects in $O$. The goal is to find a desirable assignment keeping in view the preferences of the agents.
A \emph{random assignment} which we will simply refer to as assignment assigns the probability of agents getting objects.
A random assignment can be represented as a bistochastic matrix in which each entry denotes the probability of an agent getting an object. Since both the probability of an agent getting some object and the probability that an object is allocated to some agent is one, each column and row of the random assignment matrix sums up to one.  
A \emph{deterministic assignment} is a random assignment in which the probability of an agent getting an object is either one or zero. The advantage of using random assignments instead of deterministic assignments is that they can allow for better ex ante fairness. It is well-known that any random assignment can be a result of a probability distribution over deterministic assignments~\citep{Birk46a}. A random assignment also a useful time-sharing interpretation whereby the probability of agent $i$ getting object $o$ is the fraction of time he will be matched to object $o$~\citep[see e.g., ][]{RRV93a,BCKM12a}.

In this paper, we focus on efficiency of random assignments. 
A deterministic assignment $p$ is \emph{Pareto optimal} if there exists no other deterministic assignment $q$ such that each agent weakly prefers his object allocated in assignment $q$ and at least one agent strictly prefers his object allocated in assignment $q$. When the assignment is random, Pareto optimality can be generalized to two well-studied efficiency concepts --- ex post efficiency  and stochastic dominance (SD) efficiency. A random assignment is \emph{ex post efficient} if it can be represented as a convex combination of Pareto optimal deterministic assignments. A random assignment is SD-efficient is there exists no other random assignment which each agent weakly  prefers and some agent strictly prefers with respect to the stochastic dominance relation.
Ex post efficiency is a weaker requirement than  \emph{stochastic dominance (SD) efficiency}~\citep{KaSe06a}.

The main research problem in this paper is to understand \emph{the structure and complexity of efficient assignments in particular ex post efficiency assignments}. We not only consider ex post efficiency and SD-efficiency but also introduce an intermediate notion called \emph{robust ex post efficiency} that is weaker than SD-efficiency and stronger than ex post efficiency.  
We seek to understand the geometry of the ex post efficient polytope and where the robust ex post efficient and SD-efficient points lie within the ex post efficient polytope or the assignment polytope.  An efficiency concept is deemed \emph{combinatorial} if the efficiency of an assignment solely depends on which entries of the assignment matrix are zero or non-zero. We explore which of the efficiency concepts are combinatorial. We also consider natural computational problems related to efficiency of random assignments. Previously, computational aspects of Pareto optimal deterministic assignments have been studied in great depth in recent years~\citep{ACMM05a,ABH11c,Manl13a,ABM14b}. Similar analysis has been done for SD-efficient assignments where it has been shown that
not only can an SD-efficient random assignment be computed efficiently~\citep{BoMo01a}, a linear programming formulation can be used to check whether an assignment is SD-efficient or not~\citep{Atha11a}. However, to the best of our knowledge, the complexity of testing ex post efficiency has not been settled. Testing ex post efficiency is also closely related to implemeting a random assignment with respect to discrete Pareto optimal assignments. 

If one is able to compute an SD-efficient assignment~\citep{BoMo01a}, then the question arises that why should we bother with a less demanding notion of efficiency? There are a number of reasons why implementation of ex post assignments and testing ex post efficiency is important.
Firstly, the algorithm to test SD-efficiency of a random assignment cannot be used to test weaker notions of efficiency.
Secondly, in many scenarios, a random assignment may be given \emph{a priori} because of various constraints and hence may not be SD-efficient. For example, the random assignment could be a result of an already decided time sharing agreement. Such a random assignment may need to be implemented in any case and would preferably  be implemented via Pareto optimal deterministic assignments.
For example, agents may already have a time sharing assignment in place and one may want to know whether it can be achieved by randomizing over deterministic Pareto optimal assignments. Thirdly, there may already be simple strategyproof method such as the uniform assignment rule in place where each agent gets $1/n$ of each object.\footnote{The uniform assignment also satisfies other properties such a probabilistic consistency~\citep{Cham04a}} One may want to implement the uniform assignment via a convex combination of Pareto optimal assignments even if it may not be SD-efficient. We also note that SD-efficiency is incompatible with strategyproofness when also requiring anonymity~\citep{BoMo01a}. Finally, the convex hull of deterministic Pareto optimal assignments is an interesting mathematical object and testing ex post efficiency of a random assignment is equivalent to checking whether a given assignment is in the convex hull. The problem has important connections with optimizing linear functions over this convex hull.

\paragraph{Contributions}

	 We first examine the problem of checking whether a given random assignment is ex post efficient and obtain insights into why the problem may be computationally challenging.  
 We show that whereas computing an ex post efficient assignment is easy, checking whether a given random assignment is ex post efficient is NP-complete.
Hence implementing a given random assignment via deterministic Pareto optimal assignments is NP-hard. Even if it is known that a random assignment is ex post efficient, finding its Pareto optimal decomposition is NP-hard. Our result also implies that optimizing over the convex hull of Pareto optimal assignments is NP-complete.
	
	 We formalize a new efficiency concept called \emph{robust ex post efficiency} that is weaker than SD-efficiency but stronger than ex post efficiency.  
	A characterization of robust ex post efficiency is also presented. Previously, characterizing SD-efficiency has already attracted considerable interest~\citep[see \eg][]{AbSo03a,Atha11a,BoMo01a}. 
	We show that robust efficiency can be checked in polynomial time if there are a constant number of agent types.	

	
	 We highlight that the well-known random serial dictatorship mechanism~\citep{ABB13b} is not robust ex post efficient. Our finding strengthens the observation of \citet{BoMo01a} that random serial dictatorship is not SD-efficient.

We show that whereas robust ex post efficiency is combinatorial, ex post efficiency is not. The finding that ex post efficiency is not combinatorial also contrasts with the fact that in randomized voting, ex post efficiency of a lottery simply depends on its support.

			Table~\ref{table:summary-expost} summarizes some of the results.  
					\begin{table*}[h!]
							\small
							\centering
								\scalebox{0.8}{
						\centering
						\begin{tabular}{lccc}
						\toprule
					&Ex post efficiency & Robust ex post efficiency& SD-efficiency \\
					\midrule
					\multirow{2}{*}{Complexity of verification}&NP-complete&in coNP, in P for const \# agent types&in P\\
					&(Theorem~\ref{mainthm})&(Remark~\ref{remark:incoNP}), (Lemma~\ref{lemma:agent-types})&(Theorem 1, \citep{Atha11a})\\
					\midrule
					\multirow{2}{*}{Combinatorial}&no&yes&yes\\
					&(Theorem~\ref{th:expost-not-combinatorial})&(Theorem~\ref{th:robust-is-combinatorial})&(Lemma 3, \citep{BoMo01a})\\
							\bottomrule
						\end{tabular}
						}

							\caption{Summary of results and related work.}
						\label{table:summary-expost}
						\end{table*}

\section{Preliminaries}

\paragraph{Assignment setting}

An assignment problem is a triple $(N,O,\pref)$ such that $N=\{1,\ldots, n\}$ is the set of agents, $O=\{o_1,\ldots, o_n\}$ is the set of objects, and the preference profile $\pref=(\pref_1,\ldots, \pref_n)$ specifies for each agent $i$ his preferences $\pref_i$ over objects in $O$. 
We write~$a \pref_i b$ to denote that agent~$i$ values object~$a$ at least as much as object~$b$ and use~$\spref_i$ for the strict part of~$\pref_i$, i.e.,~$a \spref_i b$ iff~$a \pref_i b$ but not~$b \pref_i a$. 
We will assume that the agents have strict preferences and that
$o_1\succ_i o_2 \cdots, o_n$ is represented by a comma separated list as follows:
\begin{align*}
	i: o_1,o_2,\ldots, o_n
\end{align*}

\begin{example}[Assignment Problem]
	Consider an assignment problem in which $N=\{1,2,3,4\}$, $O=\{o_1,o_2,o_3,o_4\}$ and the preferences $\pref$ are as follows.
		\begin{align*}
	1:&\quad o_1,o_2,o_3,o_4\\
	2:&\quad o_1,o_2,o_3,o_4\\
	3:&\quad o_2,o_1,o_4,o_3\\
	4:&\quad o_2,o_1,o_4,o_3
	\end{align*}
\end{example}


%
A random assignment $p$ is a $(n\times n)$ matrix $[p(i)(o_j)]$ such that $ p(i)(o_j) \in [0,1]$ for all $i\in N$, and $o_j\in O$, 
; $\sum_{i\in N}p(i)(o_j)= 1$ for all $o_j\in O$; and 
 $\sum_{o_j\in O}p(i)(o_j)= 1$ for all $i\in N$.
The value $p(i)(o_j)$ represents the probability of object $o_j$ being allocated to  agent $i$. Each row $p(i)=(p(i)(o_1),\ldots, p(i)(o_n))$ represents the allocation of agent $i$. 
The set of columns correspond to the objects $o_1,\ldots, o_n$.
A feasible random assignment is \emph{deterministic} if $p(i)(o)\in \{0,1\}$ for all $i\in N$ and $o\in O$.
A \emph{uniform assignment} is a random assignment in which each agent has probability $1/n$-th of getting each object. 

\begin{example}[Random assignment]
For and assignment problem in which $N=\{1,2,3,4\}$, $O=\{o_1,o_2,o_3,o_4\}$, the following is an example of a random assigment:
\[p=\begin{pmatrix}
	\nicefrac{5}{12}&\nicefrac{1}{12}&\nicefrac{5}{12}&\nicefrac{1}{12}\\
\nicefrac{5}{12}&\nicefrac{1}{12}&\nicefrac{5}{12}&\nicefrac{1}{12}\\
\nicefrac{1}{12}&\nicefrac{5}{12}&\nicefrac{1}{12}&\nicefrac{5}{12}\\
\nicefrac{1}{12}&\nicefrac{5}{12}&\nicefrac{1}{12}&\nicefrac{5}{12}
	\end{pmatrix}.\]
In $p$, the probability of agent $1$ getting $o_3$ is $5/12$.
\end{example}

Given two random assignments $p$ and $q$, $p(i) \succsim_i^{SD} q(i)$ i.e.,  an agent $i$ \emph{SD~prefers} allocation $p(i)$ to allocation $q(i)$ if 
	\[	\sum_{o_j\in \set{o_k\midd o_k\succsim_i o}}p(i)(o_j)\quad \ge \sum_{o_j\in \set{o_k\midd o_k\succsim_i o}}q{(i)(o_j)} \quad \text{ for all } o\in O.
	\]
	
		An assignment $p$ is \emph{SD-efficient} is there exists no assignment $q$ such that $q(i) \succsim_i^{SD} p(i)$ for all $i\in N$ and $q(i) \succ_i^{SD} p(i)$ for some $i\in N$. An assignment is \emph{ex post efficient} if it be can represented as a probability distribution over the set of Pareto optimal assignments.

We say that a deterministic assignment $q$ is \emph{consistent} with a random assignment $p$ if for each $q(i)(o)=1$, we have that  $q(i)(o)>0$.		
A deterministic assignment can be represented by a permutation matrix in which an entry of one denotes the row agent getting the column object.
A \emph{decomposition} of a random assignment $p$ is a sum $\sum_{i=1}^k \lambda_iP_i$ such that $\lambda_i\in (0,1]$ for $i\in \{1,\ldots,k\}$, $\sum_{i=1}^k\lambda_i=1$, and each $P_i$ is a permutation matrix (consistent with $p$). 

\begin{example}[Decomposition of a random assignment]
Consider a random assignment
\[\begin{pmatrix}
\nicefrac{1}{3}&\nicefrac{1}{3}&\nicefrac{1}{3}&0\\
\nicefrac{1}{3}&\nicefrac{1}{3}&0&\nicefrac{1}{3}\\
	\nicefrac{1}{3}&0&\nicefrac{2}{3}&0\\
0&\nicefrac{1}{3}&0&\nicefrac{2}{3}
	\end{pmatrix}\]

Then, the following is a valid decomposition of the assignment.	
	\[
\frac{1}{3}\begin{pmatrix}
1&0&0&0\\
0&1&0&0\\
	0&0&1&0\\
0&0&0&1
	\end{pmatrix}+\frac{1}{3}\begin{pmatrix}
0&0&1&0\\
0&0&0&1\\
	1&0&0&0\\
0&1&0&0
	\end{pmatrix}+\frac{1}{3}\begin{pmatrix}
0&1&0&0\\
1&0&0&0\\
	0&0&1&0\\
0&0&0&1
	\end{pmatrix}.\]
\end{example}

\begin{fact}
A deterministic assignment is Pareto optimal iff it is ex post efficient iff it is SD-efficient. 
\end{fact}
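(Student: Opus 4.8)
The plan is to prove the triple equivalence by showing that, for a single permutation matrix $P$, both "ex post efficient" and "SD-efficient" collapse onto "Pareto optimal." First I would dispose of the equivalence of Pareto optimality and ex post efficiency. One direction is immediate: a Pareto optimal deterministic $P$ is the degenerate distribution placing all weight on $P$ itself, hence ex post efficient. For the converse I would use that a permutation matrix is a vertex of the Birkhoff polytope, so if $P=\sum_i\lambda_iP_i$ is a convex combination of permutation matrices with $\lambda_i>0$, then necessarily every $P_i=P$. Thus if $P$ is ex post efficient, the Pareto optimal matrices appearing in its decomposition all coincide with $P$, making $P$ Pareto optimal.

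The next step is a short dictionary lemma linking the SD relation to the ordinal preference for point-mass allocations. If agent $i$ receives object $a_i$ in $P$ and object $b_i$ in a deterministic $q$, then evaluating the definition of $\succsim_i^{SD}$ on the two indicator allocations yields $q(i)\succsim_i^{SD}P(i)\iff b_i\succsim_i a_i$ and $q(i)\succ_i^{SD}P(i)\iff b_i\succ_i a_i$. Hence, among deterministic assignments, SD-domination is literally the same relation as Pareto domination. This gives SD-efficient $\Rightarrow$ Pareto optimal by contraposition: if $P$ is not Pareto optimal, the dominating deterministic assignment is itself a (random) assignment that SD-dominates $P$.

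The main obstacle is the reverse implication Pareto optimal $\Rightarrow$ SD-efficient, since SD-efficiency forbids domination by \emph{arbitrary} random assignments, not merely deterministic ones. I would argue the contrapositive. Suppose a random $q$ SD-dominates $P$, and write $a_i$ for $i$'s object in $P$. Evaluating $q(i)\succsim_i^{SD}P(i)$ at $o=a_i$ forces $\sum_{o_j\succsim_i a_i}q(i)(o_j)\ge 1$, so $q$ places positive probability only on objects weakly preferred to $a_i$; strictness at some agent $i_0$ forces $q(i_0)$ to put positive weight on an object $b\succ_{i_0}a_{i_0}$. Now decompose $q=\sum_k\mu_kQ_k$ into permutation matrices consistent with $q$ via Birkhoff's theorem. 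Each $Q_k$ assigns every agent $i$ an object in the support of $q(i)$, hence one weakly preferred to $a_i$, so $Q_k$ weakly Pareto dominates $P$; and because $q(i_0)(b)>0$, some $Q_{k_0}$ gives $i_0$ the strictly better object $b$. That $Q_{k_0}$ strictly Pareto dominates $P$, contradicting Pareto optimality. Chaining these implications delivers the three-way equivalence, and the only point requiring care is the support/consistency bookkeeping in the decomposition, namely checking that each $Q_k$ uses only positively-weighted entries of $q$, which is exactly the consistency condition recorded above.
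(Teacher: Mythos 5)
Your proof is correct; note, however, that the paper states this Fact without any proof at all, so there is no in-paper argument to compare against---your write-up supplies the missing argument, and it is the standard and complete one. All three of your ingredients check out. For ex post efficient $\Rightarrow$ Pareto optimal, the extreme-point argument works; even more elementarily, if some Pareto optimal $P_i \neq P$ appeared with weight $\lambda_i > 0$ in a decomposition of $P$, then $P_i$ would have a $1$ in a position where $P$ has a $0$, forcing that entry of the mixture to be at least $\lambda_i > 0$, a contradiction. Your dictionary lemma is right, and it is worth flagging that it (and the later step extracting $b \succ_{i_0} a_{i_0}$ with $q(i_0)(b) > 0$) relies on preferences being \emph{strict}: with indifferences, $q(i_0)$ could spread mass over objects indifferent to $a_{i_0}$ and your conclusion would need restating, but the paper's model assumes strict preferences, so you are fine. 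For Pareto optimal $\Rightarrow$ SD-efficient, your consistency bookkeeping is indeed the only delicate point, and it holds automatically for \emph{any} decomposition, not just Birkhoff's: if $Q_k(i)(o) = 1$ with $\mu_k > 0$, then $q(i)(o) \geq \mu_k > 0$ since all terms in $q = \sum_k \mu_k Q_k$ are nonnegative. One structural remark: the implication SD-efficient $\Rightarrow$ ex post efficient would also follow from the paper's later theorem that SD-efficiency implies robust ex post efficiency implies ex post efficiency, but since this Fact precedes that theorem in the paper, your self-contained chain (Pareto $\Leftrightarrow$ ex post efficient and Pareto $\Leftrightarrow$ SD-efficient, each proved directly) is the appropriate organization.
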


Therefore both SD-efficiency and ex post efficiency are natural generalizations of Pareto optimality in the context of random assignments. The convex hull of Pareto optimal discrete assignments will be denoted by $\mathcal{P}$.

An efficiency concept $X$ is \emph{combinatorial} if for any two random assignments $p$ and $q$ such that $q(i)(o)>0$ if and only if $p(i)(o)>0$, it holds that $p$ is efficient with respect to $X$ if and only if $q$ is efficient with respect to $X$.

%


%

\paragraph{Insights into ex post efficiency}

Before we examine ex post efficient random assignment, we review some characterizations of deterministic Pareto optimal assignments. The first characterization is with respect to deterministic assignment algorithm called \emph{serial dictatorship} which takes as a parameter a permutation $\pi$ over the set of agents.
Serial dictatorship $Prio(N,O,\pref,\pi)$ lets the agents in the permutation $\pi$ serially take their most preferred object that has not yet been allocated until each agent has an object.

\begin{fact}[\citet{AbSo98a}]\label{fact:po:sd}
Each Pareto optimal assignment is an outcome of applying serial dictatorship with respect to some permutation of the agents.
\end{fact}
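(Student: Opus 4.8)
The plan is to prove the statement by induction on the number of agents $n$, with the engine of the induction being a single structural lemma: in any Pareto optimal deterministic assignment at least one agent is allocated the object he ranks first overall. Granting that lemma, the permutation $\pi$ is built greedily, placing such a top-satisfied agent first in the order and then recursing on the reduced problem obtained by deleting that agent together with the object assigned to him.

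First I would establish the key lemma. Let $q$ be a Pareto optimal deterministic assignment and suppose, for contradiction, that no agent receives his most preferred object under $q$. Form the functional digraph on $N$ in which each agent $i$ points to the unique agent who is allocated, under $q$, the object that $i$ ranks first. Since by assumption no agent holds his own top object, this map has no fixed point; as every vertex has out-degree exactly one and $N$ is finite, following out-edges must eventually close a cycle $i_1 \to i_2 \to \cdots \to i_k \to i_1$. Reassigning, along this cycle, to each agent the object held by the agent he points to (and leaving everyone off the cycle untouched) permutes objects only among the cycle members, so it is again a deterministic assignment; in it every agent on the cycle strictly improves to his favorite object while no one is worse off. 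This is a Pareto improvement over $q$, contradicting its optimality. Hence some agent, call him $i_1$, does receive his globally most preferred object.

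Next I would run the induction. Placing $i_1$ first, serial dictatorship has $i_1$ pick his favorite object, which is precisely $q(i_1)$, so the first step matches $q$. I then restrict attention to the reduced instance on $N \setminus \{i_1\}$ and $O \setminus \{\,o : q(i_1)(o)=1\,\}$. The restriction of $q$ to this subproblem is Pareto optimal there, because any Pareto improvement within the subproblem would extend --- keeping $i_1$ fixed on his object --- to a Pareto improvement of $q$ on the full instance. By the induction hypothesis there is a permutation $\pi'$ of the remaining agents under which serial dictatorship reproduces the restricted assignment; prepending $i_1$ gives $\pi = (i_1, \pi')$. Because $i_1$ removes from the pool exactly the object $q(i_1)$, the later agents under $\pi$ face exactly the reduced object set, so serial dictatorship with $\pi$ reproduces $q$ on the full instance. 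The base case $n=1$ is immediate.

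The part that carries the real content is the cycle argument in the lemma; everything after it is bookkeeping. The two points I would handle with care are that strictness of the preferences makes the cyclic reassignment a genuine strict Pareto improvement rather than a tie (so that Pareto optimality is actually violated), and that the claim \emph{restriction preserves Pareto optimality} relies precisely on $i_1$'s allocation being fixed and top-ranked, so that no improving coalition confined to the subproblem can involve $i_1$ or the object he took.
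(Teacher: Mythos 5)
Your proof is correct. Note that the paper itself offers no proof of this statement: it is stated as a Fact and attributed to \citet{AbSo98a}, so there is nothing internal to compare against; your argument is, in essence, the standard proof of that cited result. The two-step structure you use is sound: the cycle lemma (some agent must receive his globally top object in any Pareto optimal assignment, since otherwise the functional digraph sending each agent to the holder of his top object is fixed-point-free and must close a trading cycle of length at least two, whose execution strictly improves every cycle member) is exactly a special case of the trading-cycle characterization the paper records as Fact~\ref{fact:po:ttc}, and the induction with the reduced instance is routine once you verify, as you do, that the restriction of the assignment stays Pareto optimal because any improvement on the subproblem extends to the full instance with the first dictator held fixed at his top object. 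You also correctly flag the role of strict preferences, without which the cyclic reassignment need not be a strict improvement and the lemma would fail; this matches the paper's standing assumption of strict preferences. One could alternatively deduce your lemma directly from Fact~\ref{fact:po:ttc} rather than reprove it, which would shorten the argument, but proving it from scratch makes the write-up self-contained.
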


It follows that RSD (random serial dictator) rule which takes some $\pi$ uniformly at random and then implements serial dictatorship with respect to it is ex post efficient.

%

The next characterization of Pareto optimal is graph-theoretical.
For an assignment problem $(N,O,\pref)$ and deterministic assignment $p$, the \emph{corresponding graph} $(V,E)$ is such that $V=N\cup O$ and $E$ is defined as follows. For all $i\in N$ and all $o\in O$,

\begin{itemize}
	\item $(o,i)\in E$ iff $p(i)(o)=1$ and
	\item $(i,o)\in E$ iff $o\succ_i o^*$ where $p(i)(o^*)=1$. 
\end{itemize}

		\begin{fact}\label{fact:po:ttc}
			An assignment is Pareto optimal if and only if its corresponding graph does not admit a cycle.
		\end{fact}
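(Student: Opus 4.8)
The plan is to prove that a deterministic assignment $p$ is Pareto optimal if and only if its corresponding graph $(V,E)$ contains no directed cycle, establishing each direction by its contrapositive. First I would fix the natural interpretation of the two edge types: an edge $(o,i)$ records the current allocation (agent $i$ actually holds object $o$), while an edge $(i,o)$ records a \emph{desire} (agent $i$ strictly prefers $o$ to the object he currently holds). A directed cycle must therefore alternate between these two kinds of arcs, and I would first verify this alternation structure. Since every agent holds exactly one object and every object is held by exactly one agent (as $p$ is a permutation matrix), each vertex has out-degree and in-degree constraints that force any cycle to have the form $o_{j_1}, i_1, o_{j_2}, i_2, \ldots, o_{j_k}, i_k, o_{j_1}$, where $i_t$ holds $o_{j_t}$ (from the allocation edge $(o_{j_t}, i_t)$) and strictly prefers $o_{j_{t+1}}$ (from the desire edge $(i_t, o_{j_{t+1}})$), indices taken cyclically.

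For the direction ``cycle $\Rightarrow$ not Pareto optimal,'' I would use such a cycle to construct a Pareto improvement explicitly. Given the cycle above, I define a new deterministic assignment $q$ that agrees with $p$ everywhere except on the agents in the cycle, where each agent $i_t$ is reassigned the object $o_{j_{t+1}}$ he desires. This is a well-defined permutation matrix because the cycle visits each involved object and agent exactly once, so the reassignment is a cyclic rotation of objects among the cycle agents and leaves all row/column sums equal to one. Every agent outside the cycle keeps the same object (hence is indifferent), and every agent $i_t$ on the cycle strictly prefers his new object by the defining property of the desire edge; thus $q$ Pareto dominates $p$, so $p$ is not Pareto optimal.

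For the converse, ``acyclic $\Rightarrow$ Pareto optimal,'' I would argue by contraposition: suppose $p$ is not Pareto optimal, witnessed by a deterministic assignment $q$ in which every agent weakly prefers his $q$-object and at least one strictly prefers it. Since preferences are strict, any agent who is not indifferent must receive a strictly better object, and any agent receiving a strictly better object must be allocated a \emph{different} object under $q$. Consider the set $S$ of agents whose allocation differs between $p$ and $q$; this set is nonempty (it contains the strictly improving agent) and the objects held by $S$ under $p$ coincide with the objects held by $S$ under $q$, since $q$ merely permutes objects among $S$. Within $S$ this permutation decomposes into disjoint cycles, and any nontrivial cycle yields, for each participating agent $i_t$ holding $o_{j_t}$ under $p$ and $o_{j_{t+1}}$ under $q$ with $o_{j_{t+1}} \succ_{i_t} o_{j_t}$, precisely the allocation edge $(o_{j_t}, i_t)$ and desire edge $(i_t, o_{j_{t+1}})$ in the corresponding graph, tracing out a directed cycle.

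The main obstacle I anticipate is the bookkeeping in the converse: I must argue that the improving assignment $q$ only reshuffles objects \emph{within} the set of objects already held by the changed agents, which requires using that $q$ is itself a permutation and that weak preference combined with strictness rules out an agent moving to an object that is strictly worse. Once that closure property is nailed down, the cycle decomposition of a permutation on a finite set guarantees a genuine directed cycle exists whenever $S$ is nonempty, completing the contrapositive. I would present the cycle-to-improvement construction carefully since it is reused implicitly in both directions, and I would rely on the standard fact that a permutation restricted to its support always contains at least one cycle of length $\geq 2$.
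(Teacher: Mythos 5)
Your proof is correct. The paper states this fact without proof (it is the standard trading-cycle characterization of Pareto optimality, attributed to the cited literature), and your two-direction argument --- rotating objects along a directed cycle to construct a Pareto improvement, and extracting a cycle from the fixed-point-free permutation induced on the agents whose allocation changes under a Pareto-dominating assignment, after verifying the closure of their object set --- is precisely the standard argument the paper implicitly relies on.
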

		
		The cycle is referred to as a trading cycle because it represents a trade of objects in which each agent in the cycle gets the object that he was pointing to~\cite[see \eg][]{AzKe12a}.
We will use both Facts \ref{fact:po:sd} and \ref{fact:po:ttc} in our arguments.


\begin{remark}
	An ex post efficient assignment can be computed in polynomial time. An outcome of serial dictatorship is ex post efficient. If we also require anonymity, then a maximum utility matching for utilities consistent with the ordinal preferences is also SD-efficient and hence ex post efficient.
\end{remark}

\begin{remark}
An ex post random assignment can have multiple Pareto optimal decompositions. Consider three agents with identical preferences. 
It can even be the case that an ex post efficient lottery can be expressed by a lottery over Pareto dominated assignments ~\citep[Example 2,][]{AbSo03a}. 
\end{remark}


\section{Ex post efficiency}

We consider the problem of testing ex post efficiency of a random assignment. Since we are interested in checking whether a random assignment can be decomposed into Pareto optimal assignment, we are reminded of Birkhoff's algorithm that can decompose any given random assignment (represented by a bistochastic matrix) into a convex combination  of at most $n^2-n+1$ deterministic assignments (represented by permutation matrices)~\citep{LoPl09a}. 

\paragraph{Birkhoff's Algorithm}
Birkhoff's algorithm works as follows. We initialize $i$ to $1$. For a bistochastic matrix $M$, a permutation matrix $P_i$ with respect to $M$ is guaranteed to exist. $M$ is set to $M-\lambda_1{P_i}$ where $\lambda_i\in (0,1)$ is such that no entry $M-\lambda_i{P_i}$ is negative but there is at least one extra zero entry in $M-\lambda{P_i}$ than in $M$. Index $i$ is incremented by one. The updated $M$ is again bistochastic. The process is repeated (say $k-1$ times) until $M$ is the zero matrix. Then $M=\sum_{i=1}^k \lambda_iP_i$. When Birkhoff's algorithm identifies a permutation matrix $P_i$, multiplies it by a constant and then subtracts it from the input matrix representing the random assignment, we will refer to such a step as \emph{decomposing with respect to a permutation matrix}.
\newline

If a random assignment is SD-efficient, then \emph{any} decomposition of the assignment is a decomposition into Pareto optimal deterministic assignments. On the other hand, if a random assignment is not SD-efficient, it may not admit a decomposition into Pareto optimal deterministic assignments.
One may wonder whether we can modify Birkhoff's algorithm to check whether a given random assignment is ex post efficient or not. 
Note that if we decompose with respect to some permutation matrices, then we can decompose in any order.

We first show that checking whether there exists a Pareto optimal permutation matrix that can be used to further decompose an assignment is NP-complete.

\begin{theorem}
Checking whether there exists a Pareto optimal deterministic assignment consistent with the given random assignment is NP-complete.
\end{theorem}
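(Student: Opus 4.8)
The plan is to prove membership in NP and then establish NP-hardness by a reduction from \SAT. Membership is immediate: a witness is simply a deterministic assignment $q$, and given $q$ one checks in polynomial time that $q$ is a permutation matrix, that it is consistent with $p$ (i.e.\ $q(i)(o)=1$ implies $p(i)(o)>0$, a scan of the support), and that it is Pareto optimal. By \factref{fact:po:ttc} the last test amounts to verifying that the corresponding graph of $q$ is acyclic, which is a single cycle-detection pass. Hence the problem is in NP, and all the work lies in hardness.

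For hardness I would reduce from \SAT. Given a formula $\form$ over variables $x_1,\dots,x_m$ and clauses $C_1,\dots,C_k$, I would build an assignment problem together with a random assignment $p$ for which a consistent Pareto optimal deterministic assignment exists iff $\form$ is satisfiable. Only the \emph{support} of $p$ matters (which pairs are allocable), so I would first fix a set $S$ of admissible agent--object pairs and then realise it as an honest bistochastic matrix by taking $p$ to be a uniform average of permutation matrices that together cover exactly $S$; this both guarantees feasibility and forces the support to equal $S$.

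The encoding uses two kinds of gadgets. A \emph{variable gadget} for $x_j$ consists of private agents and objects whose admissible pairs form a short even cycle, so that the support restricts the local matching to exactly two perfect matchings, read as $x_j$ true or $x_j$ false. Preferences are arranged so that, for each occurrence of a literal, the literal's agent receives its top admissible object when the literal is satisfied and a strictly worse object when it is falsified, in which case the agent ``wants to trade'' and contributes an outgoing edge $(i,o)$ of the corresponding graph pointing at a more preferred object. A \emph{clause gadget} for $C=\ell_1\vee\ell_2\vee\ell_3$ then threads the literal-agents into a single potential trading cycle that closes \emph{iff} all three literals are falsified; satisfying any literal deletes one of its ``wants-to-trade'' edges and breaks the cycle. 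One then checks both directions: a satisfying assignment yields a consistent matching whose corresponding graph is acyclic, hence Pareto optimal by \factref{fact:po:ttc}; conversely any consistent Pareto optimal matching induces, via the forced two-state behaviour of each variable gadget, a truth assignment that must satisfy every clause, since an unsatisfied clause would complete its trading cycle.

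The main obstacle I expect is global control of cycles. It is easy to force the intended per-clause cycle, but one must rule out \emph{spurious} trading cycles that weave across several gadgets and would render every matching non--Pareto-optimal, collapsing the reduction. This forces a careful layering of the preference lists, so that the only admissible cyclic ``prefers-the-next-object'' patterns are the clause cycles, together with a separate mechanism ensuring that a variable appearing in many clauses is read consistently across its occurrences. A secondary point to verify is that $S$ genuinely supports a bistochastic matrix, i.e.\ that every admissible pair lies in some perfect matching of the support graph; the uniform-average construction of $p$ handles this by design.
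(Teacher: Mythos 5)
Your NP-membership argument is fine and standard. The hardness half, however, has a genuine gap: it is a plan for a gadget construction rather than a construction, and the step you explicitly defer---ruling out spurious trading cycles that weave across gadgets, and forcing all occurrences of a variable to be read consistently---is precisely the hard part, not a checkable detail. Pareto optimality is a global property of the single candidate matching, so your only enforcement tool is the presence or absence of a trading cycle in that one matching; but a trading cycle penalises only one direction of disagreement between a variable gadget's state and a literal occurrence, so you need cycle-creating edges for both directions of inconsistency, and adding them naively is what manufactures cross-gadget cycles. Tellingly, the paper's own source contains an abandoned \SAT reduction of exactly this shape, which collapses at exactly this point (``even if $F$ is satisfiable, there can be a trading cycle spanning different clauses''), and the successful \SAT reduction the authors do carry out---for the harder problem of testing ex post efficiency, \thmref{mainthm}---requires per-clause copies $x_i^j$ of each variable, the ordering assumption $j_1<j_2<j_3$ within clauses, and a sign-propagation claim whose proof exploits the probability-$\nicefrac{1}{2}$ structure of a \emph{decomposition}; that last device is unavailable in your single-matching setting and your sketch offers no replacement. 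As written, the reduction cannot be verified and there is no evidence the details close without substantial new ideas.

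For this particular statement the paper takes a far shorter route that avoids gadgets altogether: a reduction from {\sc SerialDictatorshipFeasibility}---deciding whether some permutation of the agents makes serial dictatorship give a designated object $o$ to a designated agent $i$---which is known to be NP-complete \citep{SaSe13a}. Take the random assignment $p$ with $p(i)(o)=1$ and every other agent receiving positive probability on every object of $O\setminus\{o\}$. Any deterministic assignment consistent with $p$ must give $o$ to $i$, and conversely every assignment of the remaining objects is consistent; since by \factref{fact:po:sd} the Pareto optimal assignments are exactly the serial-dictatorship outcomes, a consistent Pareto optimal deterministic assignment exists if and only if the {\sc SerialDictatorshipFeasibility} instance is a yes-instance. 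If you insist on a self-contained \SAT reduction, you would essentially have to reproduce the construction of \thmref{mainthm}, which is where the copies and ordering tricks that resolve your ``main obstacle'' actually live.
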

\begin{proof}
 The following problem is NP-complete: {\sc SerialDictatorshipFeasibility} --- check whether there exists a permutation of agents for which serial dictatorship gives a particular object to an agent~\citep{SaSe13a}. We present a reduction from {\sc SerialDictatorshipFeasibility} to the problem of checking whether there exists a Pareto optimal deterministic assignment consistent with the given random assignment
Consider a random assignment $p$ in which agent $i$ gets $o$ with probability one and all other agents gets each object in $O\setminus \{o\}$ with non-zero probability. 
We argue that {\sc SerialDictatorshipFeasibility} has a yes instance if and only if $p$ admits a Pareto optimal deterministic assignment consistent with it.

If there exists a Pareto optimal deterministic assignment consistent with $p$, then this implies that there exists a Pareto optimal assignment in which  $i$ gets $o$. This means that there exists some permutation $\pi$ such that $Prio(N,O,\pref,\pi)(i)(o)=1$. 

Assume that there exists some  permutation $\pi$ such that $Prio(N,O,\pref,\pi)(i)(o)=1$. Then the deterministic assignment corresponding to  $Prio(N,O,\pref,\pi)$ is a  Pareto optimal permutation matrix that is consistent with $p$.
\end{proof}

The statement above does not imply that checking whether a random assignment is ex post efficient is NP-complete. It is even not clear whether testing ex post efficiency of a random assignment is in NP. The reason is that the certificate for membership may not in principle be polynomial-sized. However, we show that verifying an ex post efficient random assignment is in NP but the problem is NP-complete.

	\begin{theorem}\label{mainthm}
	Testing ex post efficiency of a random assignment is NP-complete.
	\end{theorem}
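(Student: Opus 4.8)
The plan is to establish both membership in NP and NP-hardness. For membership, the obstacle flagged in the text is that a decomposition could a priori require exponentially many permutation matrices. I would remove this obstacle by a dimension count: the doubly stochastic matrices form the Birkhoff polytope, of dimension $(n-1)^2$, and $\mathcal{P}$ is a subpolytope whose vertices are exactly the Pareto optimal permutation matrices. By Carath\'{e}odory's theorem, every $p\in\mathcal{P}$ is a convex combination of at most $(n-1)^2+1$ such vertices, so a yes-certificate consists of polynomially many permutation matrices together with rational weights. One then checks in polynomial time that each matrix is consistent with $p$, that it is Pareto optimal (build its corresponding graph and test for a cycle, using \factref{fact:po:ttc}), and that the weighted sum equals $p$; taking the weights to be a basic feasible solution of the induced linear system bounds their bit-size polynomially. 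This puts the problem in NP.

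For hardness I would reduce from {\sc SerialDictatorshipFeasibility}, the very problem used in the theorem above, which asks whether some serial dictatorship makes a designated agent $i$ receive a designated object $o$; by \factref{fact:po:sd} this is equivalent to asking whether some Pareto optimal deterministic assignment gives $o$ to $i$. The key idea is to build a random assignment $p$ in which the critical entry satisfies $p(i)(o)>0$ (but small), while the remaining support is engineered to be ``safe,'' meaning that its zero/non-zero pattern is flexible enough that the rest of $p$ can always be completed to a decomposition into Pareto optimal assignments, independently of the instance. The forward direction is then essentially free: if $p$ is ex post efficient, any decomposition must use some permutation matrix that places $o$ at $i$, since $p(i)(o)>0$ and every matrix in a decomposition is consistent with $p$; that matrix is Pareto optimal, so by \factref{fact:po:sd} it is a serial dictatorship outcome giving $o$ to $i$, i.e. a yes-instance.

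The converse is the hard direction and the main obstacle: I must show that a yes-instance forces $p$ to be genuinely ex post efficient, which requires exhibiting a \emph{complete} decomposition rather than merely one consistent Pareto optimal assignment (this is precisely the gap between the present theorem and the theorem stated above). The plan is to insert the satisfying serial dictatorship outcome $P^\ast$ (which gives $o$ to $i$) into the decomposition with weight $p(i)(o)$, subtract $p(i)(o)\,P^\ast$, and argue that the rescaled remainder is itself ex post efficient, which is exactly where the ``safe'' padding must do its work. Getting the padding right is delicate: I must keep all entries non-negative after the subtraction, match every row and column sum exactly, and ensure that no padding assignment introduces a trading cycle, so that \factref{fact:po:ttc} still certifies Pareto optimality of each piece. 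Finally I would confirm that $p$ is encoded using only rationals of polynomial size and that the whole construction is polynomial-time, completing the reduction.
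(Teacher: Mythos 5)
Your NP-membership argument is essentially the paper's own: both invoke Carath\'eodory's theorem on the polytope $\mathcal{P}$ to bound the decomposition by polynomially many Pareto optimal permutation matrices, then verify consistency, Pareto optimality via \factref{fact:po:ttc}, and the weighted sum in polynomial time. That half is fine.

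The hardness half has a genuine gap, and it is exactly the gap the paper itself flags immediately after its first theorem (``The statement above does not imply that checking whether a random assignment is ex post efficient is NP-complete''). Your reduction must construct $p$ in polynomial time \emph{without} knowing the witness $P^\ast$. Since the Pareto optimal assignment giving $o$ to $i$ (if one exists) is unknown and instance-dependent, the support of $p$ on the rows other than $i$ must be wide enough to cover every candidate $P^\ast$ --- essentially all of $(N\setminus\{i\})\times(O\setminus\{o\})$, as in the paper's consistency reduction. But then the rescaled remainder $\bigl(p-p(i)(o)P^\ast\bigr)/\bigl(1-p(i)(o)\bigr)$ is a near-full-support assignment over the \emph{same} agents with the \emph{same} preferences, and any decomposition of it must again consist of Pareto optimal assignments of that instance. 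So the ``safe padding'' whose remainder is ex post efficient ``independently of the instance'' cannot exist: whether the remainder decomposes is an instance of the very problem being proved hard, and the paper leaves open even the special case of the uniform assignment. Two further obstructions: deciding which individual entries $p(j)(o')>0$ can be covered by some Pareto optimal matrix is, by \factref{fact:po:sd}, precisely {\sc SerialDictatorshipFeasibility} again, so you cannot even certify that your padding is coverable; and by \thmref{th:expost-not-combinatorial} ex post efficiency is \emph{not} combinatorial, so no engineering of the zero/non-zero pattern alone can guarantee the yes-direction --- the actual probability values matter.

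This is why the paper abandons the {\sc SerialDictatorshipFeasibility} route for \thmref{mainthm} and instead reduces from 3-SAT, building a bespoke instance (clause agents $c,c_j$, variable agents $x_i$ with per-clause copies $x_i^j$, and dummies pinning each $\pm x$ pair at probability $1/2$) in which a satisfying valuation yields an explicit two-term decomposition $p=\frac12 M_1+\frac12 M_2$ whose Pareto optimality is verified directly by ruling out trading cycles, while unsatisfiability forces a trading cycle in any consistent decomposition. The yes-direction of a reduction to ex post efficiency requires exhibiting a complete decomposition by construction, not by appeal to flexible padding; you correctly identified this as the hard direction, but the plan as stated does not close it.
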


	\begin{proof}
		
		We first show that testing ex post efficiency of a random assignment is in NP. It is sufficient to show that an ex post efficient random assignment admits a polynomial-sized Pareto optimal decomposition. Consider the $n^2$ dimensional Euclidean space of all $n\times n$ matrices. Let $P$ denote the minimal polytope containing all determinisitc Pareto optimal allocations. By definition $P$ is the set of all ex post efficient allocations. Carathéodory's theorem  implies that any ex post efficient allocations must be a convex combination of no more than $n^2+1$ deterministic Pareto optimal allocations (because all vertices in $P$ are Pareto optimal allocations).\footnote{The polytope actually lives in a $n^2-2n+1$ dimensional subspace of $R^{n^2}$ because its feasible points has to satisfy $2n$ equality constraints and one of which is redundant. }

		 We prove the NP-hardness via a reduction from 3-SAT. Given a 3-SAT instance $F = (X, \mc)$, where $X=\{x_1,\ldots, x_k\}$ is the set of binary variables and  $\mc=\{C_1,\ldots,C_t\}$ is the set of clauses. Let each clause $C_j=l_{j,1}\vee l_{j,2}\vee l_{j,3}$, where  for $s=1,2,3$, $l_{j,s}$ is either $x_{j_s}$ or $\neg x_{j_s}$ with $j_1<j_2<j_3$. The assumption that $j_1<j_2<j_3$ will be crucial in the proof.

	Given a 3-SAT instance $F$, we build an assignment problem $(N, O, \succ)$ as follows.

	Let $N_1=\cup_{i=1}^k\{x_i,x_i^1,\ldots,x_i^t\}\cup \{c,c_1,\ldots,c_t\}$. For each agent $x\in N_1$, let $d_x$ denote the corresponding dummy agent. Let $N_2=\{d_x:x\in N_1\}$ and $N=N_1\cup N_2$. That is, $N=\cup_{i=1}^k\{x_i,x_i^1,\ldots,x_i^t, d_{x_i}, d_{x_i^1},\ldots,d_{x_i^t}\}\cup \{c,c_1,\ldots,c_t,d_c,d_{c_1},\ldots,d_{c_t}\}$. We will show that in the decompositions $c_j$'s are ``copies'' of $c$ and $x_i^j$'s are ``copies'' of $x_i$.

	$O=\{+x, -x:\forall x\in N_1\}$. For each $x\in N_1$, $p(x,+x)=p(x,-x)=p(d_x,+x)=p(d_x,-x)=1/2$.

	To define the preferences of the agents, we first introduce the following notation. For any literal $l_{j,s}$ in clause $j$, we let $V(l_{j,s})$ denote the item that corresponds to the value of $x_{j_s}$ that fails $l_{j,s}$. More precisely, 
	$$V(l_{j,s})=\left\{\begin{array}{rl}+{x_i^j}&\text{if }l_{j,s}=\neg x_i\\
	-{x_i^j}&\text{if }l_{j,s}=x_i
	\end{array}\right.$$

	For each $j\leq t$ and $C_j=l_{j,1}\vee l_{j,2}\vee l_{j,3}$, where $l_{j,s}$ is a literal of variable $x_{j_s}$, we let $S_{j_1}^j=\{V(l_{j,2})\}$, $S_{j_2}^j=\{V(l_{j,3})\}$, and $S_{j_3}^j=\{+{c_j}\}$.  For any $i\leq k$ and $j\leq t$, if $S_i^j$ is not defined above then $S_i^j=\emptyset$. Moreover, we let $S=\{V(l_{j,1}):\forall j\leq t\}$.

	For example, for $k=5$, $t=2$, $C_1=x_2\vee \neg  x_4\vee x_5$, and $C_2=\neg  x_2\vee  x_3\vee x_4$, we have 

	$$S_2^1=\{+{x_4^1}\}, S_4^1=\{-{x_5^1}\}, S_5^1=\{+{c_1}\}$$
	$$S_2^2=\{-{x_3^2}\}, S_3^2=\{-{x_4^2}\}, S_4^2=\{+{c_2}\}$$
	$$S=\{-{x_2^1},+{x_2^2}\}$$

	Agents' preferences are defined in two tables: preferences for $x$'s are in Table~\ref{tab:x} and preferences for all $c$'s are in Table~\ref{tab:c}.
	\setlength\extrarowheight{2pt}
	\begin{table}[htp]
	\centering
	\begin{tabular}{rl}
	\toprule $x_i:$&$+x_i,+x_i^1,\ldots,+x_i^t,-x_i,\text{others}$\\
		\midrule $d_{x_i}:$& $+x_i,-x_i,\text{others}$\\
		\midrule $\forall j\leq t$, $x_i^j:$&$\left\{\begin{array}{rl}S_i^j, -x_i^j, -x_i,+x_i^j, \text{others}&\text{ if }x_i\in C_j\\
	-x_i^j, -x_i,S_i^j, +x_i^j, \text{others}&\text{ if }x_i\not\in C_j\end{array}\right.$\\
	\midrule$\forall j\leq t$, $d_{x_i^j}:$&$-x_i^j,+x_i^j,\text{others}$\\
	\bottomrule
	\end{tabular}
	\caption{Preferences for $x$'s. \label{tab:x}}
	\end{table}

	\begin{table}[H]
	\centering
	\begin{tabular}{rl}
	\toprule $c:$&$S,+c,+c_1,\ldots,+c_t,- c,\text{others}$\\
	\midrule $d_c:$& $+c,-c,\text{others}$\\
	\midrule $c_j:$&$-c_j, -c, +c_j, \text{others}$\\
	\midrule $d_{c_j}:$&$-c_j,+c_j,\text{others}$\\
	 \bottomrule
	\end{tabular}
	\caption{Preferences for $c$'s. \label{tab:c}}
	\end{table}

	Because for any $x\in N_1$, $+x$ and $-x$ are assigned to agents $x$ and $d_x$ in $p$ with probability $1$, if $p$ is ex post efficient, then for any deterministic Pareto optimal assignment in the decomposition of $p$, $+x$ and $-x$ must be assigned to $x$ and $d_x$. Therefore, for any such assignment $M$, we say that the {\em sign} of $x$ (respectively, $d_x$) is positive, if  $+x$ is allocated to $x$ (respectively, $d_x$); otherwise the sign is negative. 

	\begin{claim}\label{claim:sign} If $p$ is ex post efficient, then in any deterministic Pareto optimal assignment $M$ in the decomposition of $p$,
	\begin{enumerate}
	\item for all $x\in N_1$, the sign of $x$ is different from the sign of $d_x$;
	\item the sign of $c$ is the same as the sign of $c_j$ for all $j\leq t$;
	\item for all $i\leq k$, the sign of $x_i$ is the same as the sign of $x_i^1,\ldots, x_i^t$.
	\end{enumerate}
	\end{claim}
	\begin{proof}
	Part 1 follows after the fact that in $M$, $+x$ and $-x$ must be assigned to $x$ and $d_x$. 

	For part 2, suppose in $M$ the sign of $c$ is negative and the sign of $c_j$ is positive, then $c$ prefers $+c_j$ and $c_j$ prefers $-c$ (see Table~\ref{tab:c}), which is a trading cycle and contradicts the assumption that $M$ is Pareto optimal. If in $M$ the sign of $c$ is positive and the sign $c_j$ is negative for some $j$, then there exists another deterministic Pareto optimal assignment $M'$ where the sign of $c$ is negative and the sign of $c_j$ is positive. This is because the probability for positive and negative signs for all agents are $1/2$. Then, the same argument can be applied $M'$.

	The proof for part 3 is similar.
	\end{proof}

	In light of Claim~\ref{claim:sign} in the remainder of this proof, we sometimes only use signs to represent the items, which will be clear from the context.

	Suppose $p$ is ex post efficient. We now show that there exists a solution to the 3-SAT instance. Let $M$ be any deterministic Pareto optimal assignment in $p$'s decomposition where the sign of $c$ is positive. In the SAT instance, we let $x_i=+$ if and only if the sign of $x_i$ is positive in $M$ (or equivalently, $x_i$ is assigned item $+x_i$). Suppose for the sake of contradiction a clause $C_j=l_{j,1}\vee l_{j,2}\vee l_{j,3}$ is not satisfied, where $l_{j,s}$ corresponds to variable $x_{j_s}$. By part 2 of Claim~\ref{claim:sign}, $V(l_{j,s})$ is allocated to $x_{j_s}$. Then, in $M$ there exists a trading cycle illustrated in Figure~\ref{fig:clause}, which is a contradiction.  In Figure~\ref{fig:clause} $a_1(o_1)\ra a_2(o_2)$ means that currently $o_1$ (respectively, $o_2$) is allocated to $a_1$ (respectively, $a_2$), and  $a_1$ prefers $o_2$ to $o_1$. Specifically, for $s=1,2,3$, if $l_{j,s}=x_{j_s}$, then $V(l_{j,s})=-x_{j_s}^j$, and $x_{j_s}$ prefers $S_{j_s}^j$ to $-x_{j_s}^j$ (Table~\ref{tab:x}); similarly if $l_{j,s}=\neg x_{j_s}$, then $V(l_{j,s})=+x_{j_s}^j$, and $x_{j_s}$ prefers $S_{j_s}^j$ to $+x_{j_s}^j$ (Table~\ref{tab:x}).

	Therefore, the 3-SAT instance is satisfiable.


	\begin{figure}[htp]
	\centering
	\includegraphics[trim=0 14cm 7.5cm 0, clip=true, width=.7\textwidth]{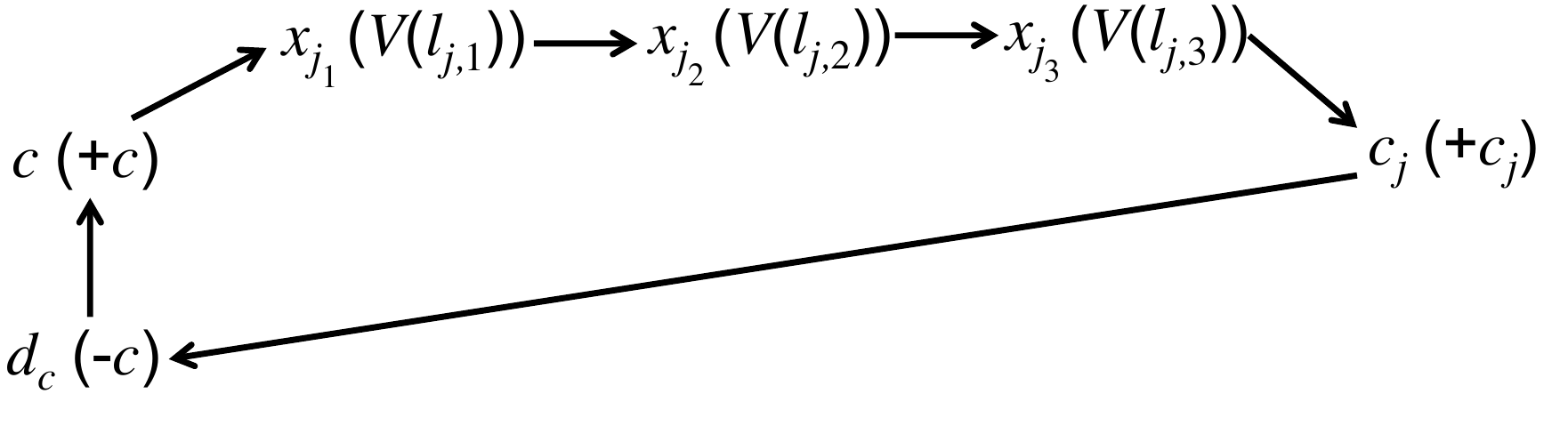}
	\caption{\small A trading cycle through an unsatisfied clause. $a_1(o_1)\ra a_2(o_2)$ means that currently $o_1$ (respectively, $o_2$) is allocated to $a_1$ (respectively, $a_2$), and $a_1$ prefers $o_2$ to $o_1$.\label{fig:clause}}
	\end{figure}

	Suppose there exists a valuation $v$ that satisfies $F$. We now construct a decomposition of $p$ to two deterministic Pareto optimal allocations $M_1$ and $M_2$. $M_1$ is illustrated in Table~\ref{tab:m1}.\begin{table}[htp]
	\centering
	\scalebox{0.85}{
	\begin{tabular}{|c|c|c|c|}
	\hline $c=c_1=\cdots= c_t$&$d_c=d_{c_1}=\cdots =d_{c_t}$& $\forall i$, $x_i=x_i^1=\cdots =x_i^t$ & $\forall i$, $d_{x_i}=d_{x_i^1}=\cdots =d_{x_i^t}$\\
	\hline $+$ & $-$&$v(x_i)$&$\neg v(x_i)$\\ \hline
	\end{tabular}
	}
	\caption{\small $M_1$. \label{tab:m1}}
	\end{table}

	$M_2$ is obtained from $M_1$ by taking the negation of all signs. More precisely, $M_2$ is illustrated in Table~\ref{tab:m2}.
	\begin{table}[htp]
	\centering
	\scalebox{0.85}{
	\begin{tabular}{|c|c|c|c|}
	\hline $c=c_1=\cdots =c_t$&$d_c=d_{c_1}=\cdots= d_{c_t}$& $\forall i$, $x_i=x_i^1=\cdots =x_i^t$ & $\forall i$, $d_{x_i}=d_{x_i^1}=\cdots =d_{x_i^t}$\\
	\hline $-$ & $+$&$\neg v(x_i)$&$v(x_i)$\\ \hline
	\end{tabular}
	}
	\caption{\small $M_2$. \label{tab:m2}}
	\end{table}
	It is easy to check that $p=\frac{1}{2}M_1+\frac{1}{2}M_2$. The demand graph of $x$'s is illustrated in Figure~\ref{fig:demandx}, where all outgoing edges of $x$'s are shown (some incoming edges are not shown). We recall that an edge from agent $a_1$ to agent $a_2$ means that $a_1$ demand the item allocated to $a_2$. (a) represents the case for $x_i=+$ and (b) represents the case for $x_i=-$. Dashed lines in (b) means that it is valid if and only if $x_i$ is a literal in $C_j$.
	\begin{figure}[htp]
	\centering
	\begin{tabular}{cc}
	\includegraphics[trim=0 3cm 13cm 8cm, clip=true, width=.5\textwidth]{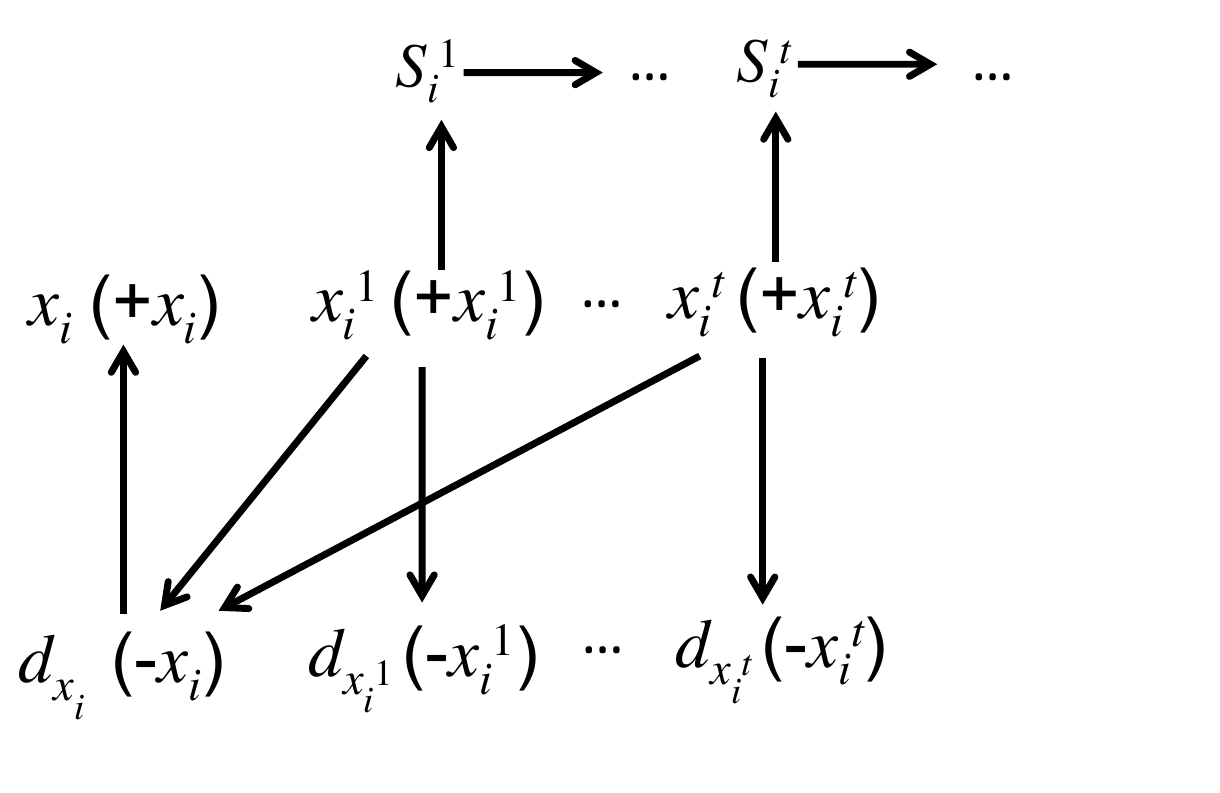}& \includegraphics[trim=13cm 3cm 0cm 8cm, clip=true, width=.5\textwidth]{Demand.pdf} \\
	(a)& (b)
	\end{tabular}
	\caption{\small The demand graph of $x_i$'s in $M_1$ and $M_2$. All outgoing edges of $x's$ and dummies are shown. In (b), $S_i^j\ra x_i^j$ if and only if $x_i\in C_j$.\label{fig:demandx}}
	\end{figure}

	\begin{claim} For all $\leq k$ and $j\leq t$, $x_i$, $d_{x_i}$ and $d_{x_i^j}$ are not involved in any trading cycle.
	\end{claim}
	\begin{proof}
	For Figure~\ref{fig:demandx} (a), no cycle can involve $x_i$, $d_{x_i^1},\ldots,d_{x_i^t}$ because these agents have their top items. Then, $d_{x_i}$ cannot be in any cycle because its only outgoing edge is to $x_i$, which is not in any cycle.

	For Figure~\ref{fig:demandx} (b), no cycle can involve $x_i$ because the only agents who may demand $-x_i$ are $x_i^j$'s with $+x_i^j$ (Table~\ref{tab:x} and \ref{tab:c}), but $x_i^j$'s get $-x_i^j$'s in Figure~\ref{fig:demandx} (b). Also no cycle can involve $d_{x_i}$ because she has her top item. For any $j$, if $d_{x_i^j}$ is involved in a cycle, then there is exactly one agent beyond $x_i$ who demands $+x_i^j$, who is the preceding agent in $C_j$ (which can be another $x_{i'}^j$ or $c$). However, if an agent demands $+x_i^j$, then $\neg x_i\in C_j$, which implies $x_i\not\in C_j$. Hence there is no edge from $x_i^j$ to $S_i^j$ in Figure~\ref{fig:demandx} (b) (see Table~\ref{tab:x}). In this case $x_i^j$ gets her top item, and because the only outgoing edge of $d_i^j$ is from $x_i^j$, it is impossible for $d_{x_i^j}$ to be involved in a trading cycle, which is a contradiction.
	\end{proof}

	We establish that both $M_1$ and $M_2$ are Pareto optimal in the following two claims.

	\begin{claim}\label{claim:m1} $M_1$ is Pareto optimal.
	\end{claim}
	\begin{proof} The demand graph of $c's$ in $M_1$ with all outgoing edges of $c$'s is illustrated in Figure~\ref{fig:demandc} (a).
	\begin{figure}[htp]
	\centering

	\end{figure}

	\begin{figure}[htp]
	\centering
	\begin{tabular}{cc}
	\includegraphics[trim=0 12cm 16cm 0, clip=true, width=.4\textwidth]{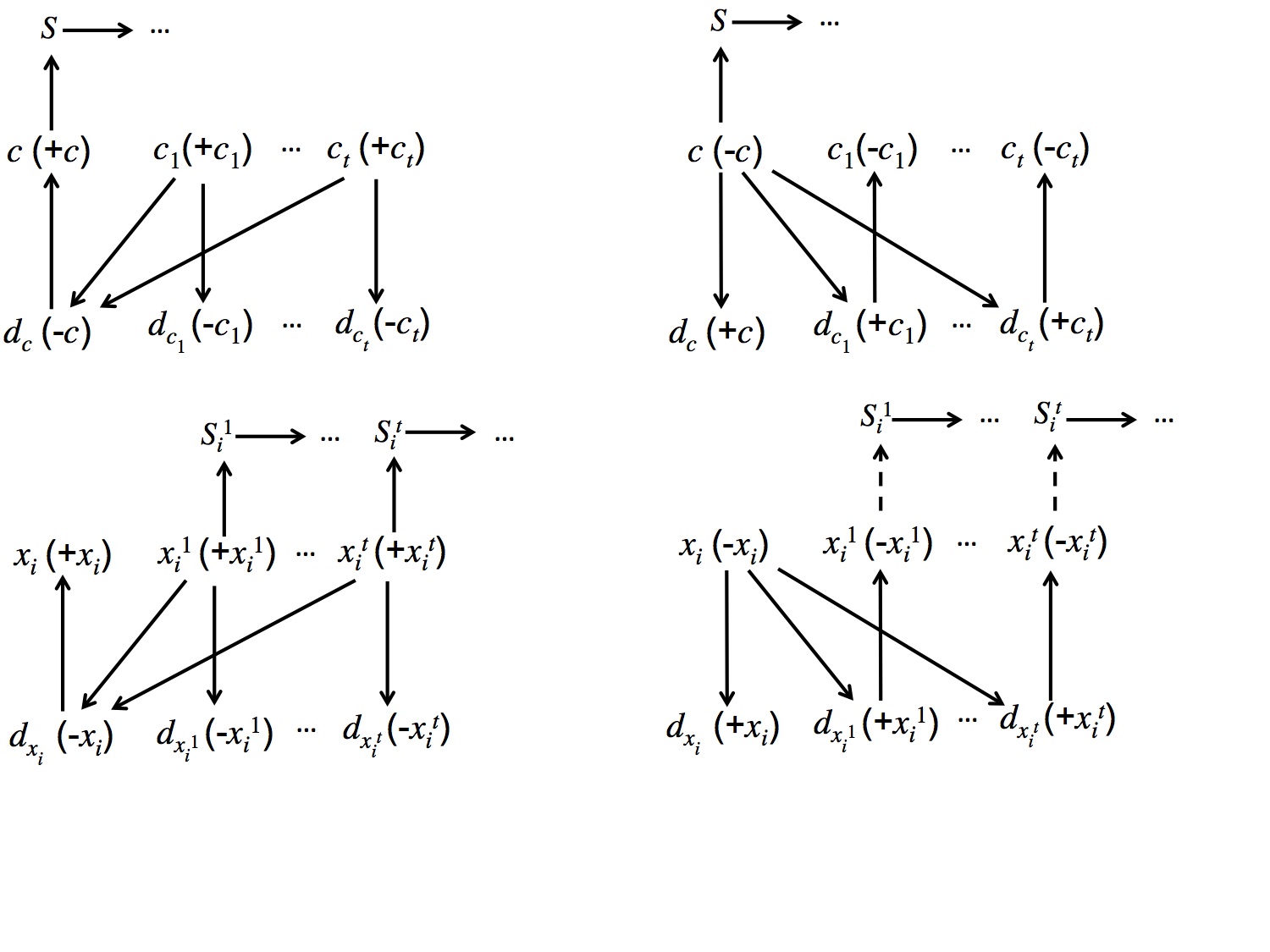}& \includegraphics[trim=11cm 12cm 2cm 0cm, clip=true, width=.5\textwidth]{Demand.jpg} \\
	(a) $M_1$. & (b) $M_2$. 
	\end{tabular}
	\caption{\small The demand graph of $c$'s in $M_1$ and $M_2$. All outgoing edges of $c's$ and dummies are shown.\label{fig:demandc}}
	\end{figure}

	Clearly for all $j\leq t$, $d_{c_j}$ is not in any cycle because they have no outgoing edges. The only possibility of cycles are through $c\ra d_c\ra c_j$, then through truncated Figure~\ref{fig:demandx}, where all $x_i$ and $d_{x_i^j}$ are removed. Because $v$ satisfies $F$, each potential path from $c_j$ to $c$ is blocked by at least one $x_i^j$. Meanwhile, there is no cycle involving only $x_i^j$'s or involving $x_i^j$'s with different $j$'s,  because any $x_i^j$ only has outgoing edges to $S_i^j$, which is either (1) empty,  or (2) contains another $x_{i'}^j$ with $i'>i$ (recall we assume that $j_1<j_2<j_3$ for the three literals in $C_j$) or $+c_j$. This proves that $M_1$ is Pareto optimal.
	\end{proof}

	\begin{claim} $M_2$ is Pareto optimal.
	\end{claim}
	\begin{proof} The demand graph of $c's$ in $M_2$ with all outgoing edges of $c$'s is illustrated in Figure~\ref{fig:demandc} (b). Clearly $d_c$ and all $c_j$'s are not in any cycle because they have no outgoing edges. After they are removed, $d_{c_j}$'s have no outgoing edges. $c$ can also be removed because none of the remaining agents demand $-c$. The only remaining agents are those in the truncated Figure~\ref{fig:demandx}, where all $x_i$ and $d_{x_i^j}$ are removed, and as in Claim~\ref{claim:m1}, there is no cycle among them.
	\end{proof}
	\end{proof}
	
	\begin{corollary}
		Checking membership of a point in $\mathcal{P}$ is NP-complete. 
		\end{corollary}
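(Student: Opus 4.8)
The plan is to observe that the corollary is essentially a restatement of \thmref{mainthm} together with a polynomial-time preprocessing step, so almost all of the work has already been done. Recall that by definition $\mathcal{P}$ is the convex hull of the Pareto optimal deterministic assignments, and that a random assignment is ex post efficient precisely when it lies in $\mathcal{P}$. The only genuine subtlety is that the membership problem takes as input an arbitrary point of $\mathbb{R}^{n^2}$, whereas \thmref{mainthm} speaks only about random (bistochastic) assignments, so I first need to bridge that gap.

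For NP-membership I would argue as follows. Every element of $\mathcal{P}$ is a convex combination of permutation matrices, hence is itself a bistochastic matrix; so any non-bistochastic input can be rejected after a single polynomial-time check of the nonnegativity and row/column-sum constraints. For a bistochastic input $p$, membership in $\mathcal{P}$ coincides with ex post efficiency of $p$, and the Carath\'eodory argument already given in the proof of \thmref{mainthm} furnishes a decomposition into at most $n^2+1$ Pareto optimal deterministic assignments. It remains to check that such a certificate is polynomial-sized: once the $n^2+1$ extreme points are guessed (each a permutation matrix, hence of polynomial size and verifiable as Pareto optimal via the acyclicity test of \factref{fact:po:ttc}), the convex weights solve a linear system whose coefficient matrix has $0/1$ entries, so by Cramer's rule they are rationals of polynomially bounded bit-length. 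Thus the whole certificate is polynomial and the membership problem lies in NP.

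For NP-hardness I would simply invoke the reduction behind \thmref{mainthm}. That reduction maps a 3-SAT instance to a random assignment $p$, which is by construction a bistochastic matrix, and establishes that $p$ is ex post efficient if and only if the formula is satisfiable. Since ex post efficiency of a bistochastic $p$ is exactly membership of $p$ in $\mathcal{P}$, the identical map reduces 3-SAT to the membership problem. Hence the membership problem is NP-hard, and combined with the previous paragraph it is NP-complete.

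The main obstacle here is not the hardness, which comes essentially for free from \thmref{mainthm}, but rather making the NP-membership argument fully rigorous for an arbitrary input point: one must confirm both that non-bistochastic points are cheaply dismissed and, more delicately, that the Carath\'eodory-style certificate has polynomially bounded encoding length. I expect the bit-complexity bound on the convex weights to be the only place requiring genuine care.
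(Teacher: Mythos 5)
Your proposal is correct and follows essentially the same route as the paper, whose entire proof is the one-line observation that $\mathcal{P}$ is by definition the set of ex post efficient points, so the corollary is immediate from Theorem~\ref{mainthm}. Your additional care --- rejecting non-bistochastic inputs up front and bounding the bit-length of the Carath\'eodory weights via Cramer's rule --- is a legitimate tightening of details the paper leaves implicit, not a different argument.
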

	\begin{proof}
		Since $\mathcal{P}$ is a convex combination of Pareto optimal determistic assignments, it contains all the ex post efficient points.
		\end{proof}
		
		\begin{corollary}
Optimizing a linear functions over $\mathcal{P}$ is NP-complete.
			\end{corollary}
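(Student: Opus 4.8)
The plan is to read the statement as a claim about the natural decision version of the problem: given an assignment problem $(N,O,\pref)$ — which implicitly defines the polytope $\mathcal{P}$ — together with a linear objective $c\in\mathbb{R}^{n\times n}$ and a threshold $\theta$, decide whether $\max_{x\in\mathcal{P}}\langle c,x\rangle\ge\theta$. NP-hardness of this decision problem immediately yields NP-hardness of the optimization task, and I would establish NP-membership separately so as to conclude NP-completeness.

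First I would settle membership in NP. Since $\mathcal{P}$ is the convex hull of finitely many deterministic Pareto optimal assignments, and every Pareto optimal permutation matrix is already a vertex of the Birkhoff polytope, the vertices of $\mathcal{P}$ are exactly the deterministic Pareto optimal assignments. A linear function attains its maximum over a polytope at a vertex, so $\max_{x\in\mathcal{P}}\langle c,x\rangle=\max\{\langle c,P\rangle : P \text{ a deterministic Pareto optimal assignment}\}$. Hence a single such $P$ with $\langle c,P\rangle\ge\theta$ is a polynomial-sized certificate: it is an $n\times n$ permutation matrix, and by Fact~\ref{fact:po:ttc} one can verify in polynomial time that it is Pareto optimal by checking that its corresponding graph contains no trading cycle. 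This places the problem in NP.

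For NP-hardness I would reduce from {\sc SerialDictatorshipFeasibility}, which asks whether some permutation $\pi$ yields $Prio(N,O,\pref,\pi)(i)(o)=1$ and is NP-complete~\citep{SaSe13a}. Given such an instance, I keep the same assignment problem $(N,O,\pref)$, hence the same $\mathcal{P}$, and set the objective to the indicator of the single entry $(i,o)$, i.e.\ $c_{i',o'}=1$ if $(i',o')=(i,o)$ and $0$ otherwise, with threshold $\theta=1$. Because every entry of an assignment lies in $[0,1]$, we get $\max_{x\in\mathcal{P}}\langle c,x\rangle\ge 1$ if and only if some deterministic Pareto optimal assignment allocates $o$ to $i$; by Fact~\ref{fact:po:sd} this holds if and only if some serial dictatorship permutation gives $o$ to $i$. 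Thus the optimization instance has value at least $1$ exactly when the {\sc SerialDictatorshipFeasibility} instance is a yes-instance, completing the reduction.

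The reduction and the linear-programming observation that the optimum sits at a vertex are routine. The one point that needs care is the NP-membership argument: a priori one only knows that a point of $\mathcal{P}$ is a convex combination of Pareto optimal assignments, and such a decomposition could be long (indeed, deciding membership is itself NP-complete by the preceding corollary). The key is that linear optimization never needs a decomposition — only a single optimal vertex, which is a Pareto optimal deterministic assignment and hence a short, efficiently checkable certificate via Fact~\ref{fact:po:ttc}. I expect this vertex/certificate argument, rather than the hardness reduction, to be the main thing to get right.
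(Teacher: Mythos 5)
Your proof is correct, but it takes a genuinely different route from the paper's. The paper derives this corollary from the preceding one (NP-completeness of membership in $\mathcal{P}$, which rests on the 3-SAT reduction of Theorem~\ref{mainthm}) via the Gr\"otschel--Lov\'asz--Schrijver equivalence of optimization and separation~\citep{GLS93a}: a polynomial-time linear optimizer over $\mathcal{P}$ would yield a polynomial-time separation (hence membership) oracle, contradicting membership hardness unless P$=$NP. You instead give a direct, self-contained argument: NP-membership via the observation that the vertices of $\mathcal{P}$ are exactly the Pareto optimal permutation matrices, so a single optimal vertex is a short certificate checkable by the trading-cycle criterion (Fact~\ref{fact:po:ttc}); and NP-hardness via a many-one reduction from \textsc{SerialDictatorshipFeasibility}~\citep{SaSe13a} with an indicator objective, using Fact~\ref{fact:po:sd} (together with the standard converse that every serial dictatorship outcome is Pareto optimal, which the paper also uses implicitly in its first theorem). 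Notably, the paper itself acknowledges your route in a footnote in the conclusions, remarking that the result ``also follows from \citep{SaSe13a}.'' Your approach buys elementariness and rigor on two counts the paper glosses over: it avoids the ellipsoid machinery (giving a Karp rather than a Turing reduction), and it actually establishes NP-membership of the decision version, which the paper's one-line proof never addresses even though the corollary claims completeness rather than mere hardness. What the paper's approach buys is brevity and a conceptual link between the two corollaries, and it does not depend on the availability of the \textsc{SerialDictatorshipFeasibility} hardness result.
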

		\begin{proof}
			Since testing membership in $\mathcal{P}$, the statements follows from the equivalence between optimizing over a poltope and implementing a separation oracle over over a polytope~\citep{GLS93a}.
			\end{proof}
	
	Although we have shown that testing whether a given assignment is ex post efficiency is NP-complete, we leave open the case when the assignment is uniform. 

We now show that in the random assignment problem, ex post efficiency is not combinatorial.\footnote{The notion of an efficiency concept being combinatorial was first discussed in \citep{ABB14a}. However, the setting was voting and not the random assignment problem. In voting, a lottery over alternatives is ex post efficient iff the support consists of Pareto optimal alternatives. Hence in voting, ex post efficiency is combinatorial.} 
			 This is already a contrast with a stronger notion of efficiency called \emph{SD-efficiency} which depends solely on the support of the random allocations. 
A trading cycle of size $2k$ is \emph{consistent} with random assignment $p$ if it consists of $k$ agents and $k$ objects, each
objects points to an agent, each object to an agent in the cycle and the cycle satisfies the following constraints: $(a)$ $p(i)(o)>0$ if 
object $o$ points to the agent $i$, and $(b)$  $o' \succ_i o$ if agent $i$ points to object $o'$.
\citet[Lemma 3, ][]{BoMo01a} proved that a random assignment is not SD-efficiency iff it admits a trading cycle consistent with it. The fact that SD-efficiency is combinatorial follows from its characterization. In contrast, ex post efficiency is not combinatorial.

			\begin{theorem}\label{th:expost-not-combinatorial}
		Ex post efficiency is not combinatorial.
			\end{theorem}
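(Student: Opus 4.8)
The plan is to exhibit two random assignments $p$ and $q$ with exactly the same support (the same set of non-zero entries) such that $p$ is ex post efficient and $q$ is not; by definition this shows ex post efficiency is not combinatorial. The observation I would exploit is that if a support $A$ admits only finitely (indeed very few) consistent permutation matrices, then ex post efficiency of a point with support $A$ reduces to membership in the convex hull of the \emph{Pareto optimal} permutation matrices whose support lies in $A$: in any decomposition $q=\sum_i\lambda_iP_i$ with $\lambda_i>0$, nonnegativity forces $\operatorname{supp}(P_i)\subseteq\operatorname{supp}(q)$, so only those finitely many permutations can appear, and for ex post efficiency each must be Pareto optimal.

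Concretely, I would take $n=4$ and choose $A$ to consist of two $2\times 2$ blocks: the cells on which agents $1,2$ may receive $o_1,o_2$ and agents $3,4$ may receive $o_3,o_4$. Exactly four permutation matrices are consistent with $A$: the identity $M_{\mathrm{id}}$, the two single-block swaps $M_{(12)}$ and $M_{(34)}$, and the double swap $M_{(12)(34)}$. These satisfy the single affine relation $M_{\mathrm{id}}+M_{(12)(34)}=M_{(12)}+M_{(34)}$, so their convex hull is a two-dimensional parallelogram whose diagonals join $M_{\mathrm{id}}$ to $M_{(12)(34)}$ and $M_{(12)}$ to $M_{(34)}$. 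I would then design the preference profile so that \emph{exactly one} vertex, say $M_{(12)}$, fails to be Pareto optimal while the other three stay Pareto optimal; note that $\{M_{\mathrm{id}},M_{(12)(34)}\}$ already covers every cell of $A$, so the Pareto optimal vertices still cover all of $A$.

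Granting such a profile, the two witnesses are immediate. For $p$ I take $\tfrac12 M_{\mathrm{id}}+\tfrac12 M_{(12)(34)}$, which has every cell of $A$ equal to $1/2$, hence support exactly $A$, and is ex post efficient as a convex combination of Pareto optimal assignments. For $q$ I take a small perturbation toward the non-optimal vertex, $q=(1-3\epsilon)M_{(12)}+\epsilon M_{\mathrm{id}}+\epsilon M_{(34)}+\epsilon M_{(12)(34)}$ with $\epsilon>0$ small; the positive weights on $M_{\mathrm{id}}$ and $M_{(12)(34)}$ guarantee support exactly $A$, so $\operatorname{supp}(q)=\operatorname{supp}(p)$. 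Since $\operatorname{supp}(q)=A$, any Pareto optimal decomposition of $q$ may only use $M_{\mathrm{id}},M_{(34)},M_{(12)(34)}$, i.e. $q$ is ex post efficient iff it lies in the triangle spanned by these three vertices. The affine relation gives $M_{(12)}$ barycentric coordinates $(1,-1,1)$ with respect to that triangle, so $M_{(12)}$ lies outside it and, for small $\epsilon$, so does $q$; hence $q$ is not ex post efficient.

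The one real obstacle is producing the preference profile realising exactly this Pareto pattern, and I expect this to be the crux. The difficulty is a coupling between configurations: trying to destabilise $M_{(12)}$ through a within-block exchange would also destabilise $M_{(12)(34)}$, which shares the same block-$1$ arrangement, so the trading cycle that kills $M_{(12)}$ must be genuinely cross-block, while the other three configurations must admit \emph{no} trading cycle. To make this bookkeeping tractable I would force both agents of each block to top-rank the same object ($o_1$ for agents $1,2$ and $o_3$ for agents $3,4$); then in every consistent assignment the holders of $o_1$ and $o_3$ sit at their top choice and cannot lie on a trading cycle, so by \factref{fact:po:ttc} the only possible cycle is a two-agent exchange between the holders of $o_2$ and $o_4$. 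Pareto optimality of each configuration then reduces to a condition on how those two agents rank $o_2$ against $o_4$, and a direct check of the four cases yields the required pattern with $M_{(12)}$ as the unique non-optimal vertex.
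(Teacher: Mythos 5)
Your proof is correct, but it takes a genuinely different route from the paper's. The paper stays inside the full-support regime: it fixes the two-type profile ($1,2\colon o_1,o_2,o_3,o_4$ and $3,4\colon o_2,o_1,o_4,o_3$), takes $p$ to be the everywhere-positive RSD outcome with entries $5/12$ and $1/12$, and takes $q$ to be the same matrix with the values $5/12$ and $1/12$ swapped, so that \emph{every} permutation matrix is consistent with both; non-efficiency of $q$ is then shown by a counting argument — in any Pareto optimal deterministic assignment, agents $1,2$ jointly receive $\{o_1,o_2\}$, $\{o_1,o_3\}$ or $\{o_3,o_4\}$ but never $\{o_2,o_4\}$, so giving agent $1$ mass $5/12$ on $o_4$ forces agent $2$ to receive $o_3$ with probability $5/12$, contradicting $q(2)(o_3)=1/12$. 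You instead shrink the support so that exactly four permutation matrices are consistent, exploit the affine relation $M_{\mathrm{id}}+M_{(12)(34)}=M_{(12)}+M_{(34)}$, and reduce ex post efficiency to membership in a triangle inside a parallelogram, decided by the sign of a unique affine coordinate (your coordinates for $q$ come out as $(1-2\epsilon,\,4\epsilon-1,\,1-2\epsilon)$, negative in the middle for small $\epsilon$). Your flagged crux also checks out: with both agents of each block top-ranking $o_1$ (resp.\ $o_3$), the holders of $o_1$ and $o_3$ have no outgoing edges, so by \factref{fact:po:ttc} the only possible trading cycle is the two-cycle between the holders of $o_2$ and $o_4$; taking agents $1$ and $3$ to prefer $o_4$ to $o_2$ and agents $2$ and $4$ to prefer $o_2$ to $o_4$ (e.g.\ $1\colon o_1,o_4,o_2,o_3$; $2\colon o_1,o_2,o_3,o_4$; $3\colon o_3,o_4,o_2,o_1$; $4\colon o_3,o_2,o_4,o_1$) makes $M_{(12)}$ the unique non-optimal vertex, since $(1,4)$ is the holder pair of $(o_2,o_4)$ only there. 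As for what each approach buys: yours is more modular and geometrically transparent — the negative-coordinate computation replaces the paper's ad hoc mass accounting, and the template generalizes to any support whose consistent permutations can be enumerated. The paper's example buys economy elsewhere: its $p$ is an RSD outcome, so the same instance doubles as the witness in Remark~\ref{remark:expostnotrobust} that random serial dictatorship is not robust ex post efficient, and it shows the phenomenon already for strictly positive matrices in the interior of the assignment polytope.
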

		\begin{proof}
		Consider the following assignment problem.
		\begin{align*}
		1:&\quad o_1,o_2,o_3,o_4\\
		2:&\quad o_1,o_2,o_3,o_4\\
		3:&\quad o_2,o_1,o_4,o_3\\
		4:&\quad o_2,o_1,o_4,o_3
		\end{align*}

		The following assignment $p$ is a result of RSD and hence ex post efficient:

		\[p=\begin{pmatrix}
			\nicefrac{5}{12}&\nicefrac{1}{12}&\nicefrac{5}{12}&\nicefrac{1}{12}\\
		\nicefrac{5}{12}&\nicefrac{1}{12}&\nicefrac{5}{12}&\nicefrac{1}{12}\\
		\nicefrac{1}{12}&\nicefrac{5}{12}&\nicefrac{1}{12}&\nicefrac{5}{12}\\
		\nicefrac{1}{12}&\nicefrac{5}{12}&\nicefrac{1}{12}&\nicefrac{5}{12}
			\end{pmatrix}.\]
			\
			Now consider the following assignment:


			\[q=\begin{pmatrix}
				\nicefrac{1}{12}&\nicefrac{5}{12}&\nicefrac{1}{12}&\nicefrac{5}{12}\\
			\nicefrac{1}{12}&\nicefrac{5}{12}&\nicefrac{1}{12}&\nicefrac{5}{12}\\
			\nicefrac{5}{12}&\nicefrac{1}{12}&\nicefrac{5}{12}&\nicefrac{1}{12}\\
			\nicefrac{5}{12}&\nicefrac{1}{12}&\nicefrac{5}{12}&\nicefrac{1}{12}
				\end{pmatrix}.\]

			Note that $q$ is a random assignment such that $q(i)(o)>0$ if and only if $p(i)(o)>0$.
			Then in each Pareto optimal matrix that can be used to decompose the assignment, due to Pareto optimality, agents $1$ and $2$ get an object each from the following sets of objects
		$\{o_1,o_2\}$, $\{o_1,o_3\}$, and $\{o_3,o_4\}$.
		Agent 1 and 2 cannot get the set $\{o_2,o_4\}$ in a Pareto optimal assignment because if it were the case then there exists a serial dictatorship in which agent 3 or 4 take $o_1$ before $o_2$ is allocated. In order for agent $1$ to get $5/12$ of $o_4$, we need to use the $5/12$ time the Pareto optimal assignment in which $o_4$ gts $5/12$. But this means that agent $2$ gets $o_3$ $5/12$ of the time. But this is not possible since agent $2$ gets $o_3$ $1/12$ of time. 
%
		\end{proof}

\section{Robust ex post efficiency}

In this section, we formalize a new efficiency concept called \emph{ex post efficiency}. Recall that a random assignment is ex post efficient if it can be represented as a convex combination of Pareto optimal deterministic assignments. We say that a random assignment is \emph{robust ex post efficient} if any decomposition of the assignment consists of Pareto optimal deterministic assignments. The following is a useful characterization of robust ex post efficiency.

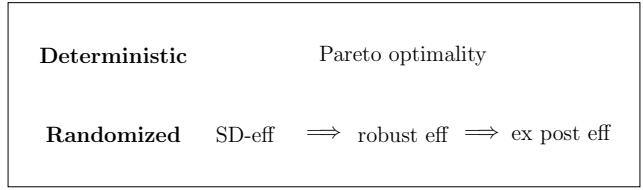
\begin{figure}[htb]

\centering
\scalebox{0.70}{
\begin{tikzpicture}
\tikzstyle{pfeil}=[->,>=angle 60, shorten >=1pt,draw]
\tikzstyle{onlytext}=[]

	\draw (-6,-1) rectangle (6,2.5);


\node[onlytext] (po) at (-4,1.5) {\large \textbf{Deterministic}};
\node[onlytext] (po) at (1.5,1.5) {\large Pareto optimality};

\node[onlytext] (po) at (-4,0) {\large\textbf{Randomized}};
\node[onlytext] (sdeff) at (-1.5,0) {\large SD-eff};
\node[onlytext] (sdeff) at (0,0) {\large$\implies$};
\node[onlytext] (robusteff) at (1.5,0) {\large robust eff};
\node[onlytext] (imp) at (3,0) {\large $\implies$};
\node[onlytext] (expost) at (4.5,0) {\large ex post eff };


\end{tikzpicture}
}
\caption{Efficiency concepts for deterministic and randomized assignment settings.}
\label{fig:relations}

\end{figure}

\begin{theorem}\label{th:robust-eff-charac}
	An assignment is robust ex post efficient iff it does not admit a non-Pareto optimal deterministic assignment  consistent with it.
\end{theorem}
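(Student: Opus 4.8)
The plan is to prove the two directions directly, with the reverse (easy) direction following immediately from the definitions and the forward direction relying on a Birkhoff-style construction.

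For the easy direction, suppose $p$ admits no non-Pareto optimal deterministic assignment consistent with it, i.e.\ every permutation matrix consistent with $p$ is Pareto optimal. By the definition of a decomposition, any decomposition $p=\sum_i \lambda_i P_i$ uses only permutation matrices $P_i$ that are consistent with $p$, and each such $P_i$ is therefore Pareto optimal by hypothesis. Hence \emph{every} decomposition of $p$ is a decomposition into Pareto optimal assignments, which is exactly robust ex post efficiency.

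For the forward direction I would argue by contrapositive: assuming $p$ admits some non-Pareto optimal deterministic assignment $M$ (a permutation matrix) consistent with it, I would exhibit a single decomposition of $p$ that contains $M$, witnessing that $p$ is not robust ex post efficient. Set $\lambda=\min\{p(i)(o):M(i)(o)=1\}$; consistency of $M$ with $p$ guarantees $\lambda>0$. If $\lambda=1$ then $p=M$, whose only decomposition is $1\cdot M$ and is already non-Pareto optimal. Otherwise consider $p-\lambda M$: since $M$ is a permutation matrix, every row and column sum drops by exactly $\lambda$, and the choice of $\lambda$ keeps all entries nonnegative, so $\tfrac{1}{1-\lambda}(p-\lambda M)$ is bistochastic. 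Applying Birkhoff's theorem (as used in Birkhoff's algorithm above) I would write $\tfrac{1}{1-\lambda}(p-\lambda M)=\sum_j \mu_j Q_j$ as a convex combination of permutation matrices, and then rearrange to obtain $p=\lambda M+\sum_j (1-\lambda)\mu_j Q_j$. This is a bona fide decomposition of $p$: its coefficients are positive and sum to one, and its matrices are permutation matrices consistent with $p$. Because $M$ is not Pareto optimal, this decomposition is not into Pareto optimal assignments, so $p$ is not robust ex post efficient.

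The argument is short, and the only point requiring care — the main obstacle — is justifying that the permutation matrices $Q_j$ produced by Birkhoff's theorem on $p-\lambda M$ remain consistent with $p$. This follows from the support-preservation property of Birkhoff decompositions (each $Q_j$ has support inside the support of the matrix it decomposes) together with the entrywise domination $0\le(p-\lambda M)(i)(o)\le p(i)(o)$, which ensures that $Q_j(i)(o)=1$ forces $p(i)(o)>0$. The degenerate case $\lambda=1$ (forcing $p=M$) must be flagged separately, but it is immediate.
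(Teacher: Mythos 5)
Your proof is correct and follows essentially the same route as the paper's: the easy direction observes that every permutation matrix in any decomposition is automatically consistent with $p$ and hence Pareto optimal by hypothesis, while the forward direction subtracts $\lambda M$ for a consistent non-Pareto optimal $M$ and completes the decomposition via Birkhoff's theorem, exactly the paper's ``decompose with respect to such an assignment and continue decomposing'' argument. Your write-up is in fact more careful than the paper's, explicitly handling the degenerate case $\lambda=1$ and the consistency of the remaining matrices $Q_j$ with $p$.
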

\begin{proof}
	Assume that there exists no non-Pareto optimal deterministic assignment  consistent with the assignment $p$. Then each time, $p$ is decomposed with respect to a deterministic assignment, it is with respect to a Pareto optimal deterministic assignment. By Birkhoff's theorem, the updated $p$ is still bistochastic after the decomposition. Hence any decomposition of $p$ is into Pareto optimal deterministic assignment and hence $p$ is robust ex post efficient.

Now let us assume that that there exists a non-Pareto optimal deterministic assignment  consistent with the assignment $p$. We then decompose $p$ with respect to such an assignment using Birkhoff's algorithm and continue decomposing it. Since the decomposition contains at least non Pareto optimal deterministic assignment, hence $p$ is not robust ex post efficient.
\end{proof}

We point out that SD-efficiency implies robust ex post efficiency which implies ex post efficiency. Note that in the general domain of voting, ex post efficiency and robust ex post efficiency are equivalent.

\begin{theorem}
SD-efficiency implies robust ex post efficiency which implies ex post efficiency. 	
\end{theorem}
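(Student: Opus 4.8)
The plan is to prove the two implications separately, handling the second by contraposition and leaning on the characterization in Theorem~\ref{th:robust-eff-charac}.

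For the easy implication, robust ex post efficiency $\implies$ ex post efficiency, note that every random assignment admits at least one decomposition into permutation matrices by Birkhoff's theorem. If $p$ is robust ex post efficient, then by definition \emph{every} such decomposition consists of Pareto optimal deterministic assignments; in particular the Birkhoff decomposition exhibits $p$ as a convex combination of Pareto optimal assignments, so $p$ is ex post efficient.

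For the harder implication, SD-efficiency $\implies$ robust ex post efficiency, I would argue the contrapositive. Suppose $p$ is not robust ex post efficient. By Theorem~\ref{th:robust-eff-charac}, $p$ admits a non-Pareto optimal deterministic assignment $M$ consistent with it. By Fact~\ref{fact:po:ttc}, the corresponding graph of $M$ contains a trading cycle; write it as $i_1,o_1,i_2,o_2,\ldots,i_k,o_k$, where $M(i_j)(o_j)=1$ and $o_{j+1}\succ_{i_j}o_j$ (indices taken modulo $k$, so that $o_{k+1}=o_1$ and $i_0=i_k$). The key step is to construct a random assignment $q$ that stochastically dominates $p$, witnessing that $p$ is not SD-efficient. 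I would shift a small amount of probability mass $\epsilon>0$ around the cycle: set $q(i_j)(o_j)=p(i_j)(o_j)-\epsilon$ and $q(i_j)(o_{j+1})=p(i_j)(o_{j+1})+\epsilon$ for every $j$, and leave all other entries equal to those of $p$. Each row and each column sum is preserved (agent $i_j$ loses and gains $\epsilon$; object $o_j$ loses $\epsilon$ from $i_j$ and gains $\epsilon$ from $i_{j-1}$), so $q$ is again bistochastic. Because $M$ is consistent with $p$ we have $p(i_j)(o_j)>0$ for each cycle agent, and $p(i_j)(o_{j+1})<1$; hence for $\epsilon$ smaller than $\min_j\{p(i_j)(o_j),\,1-p(i_j)(o_{j+1})\}$ all entries of $q$ remain in $[0,1]$. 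Since each $i_j$ moves mass $\epsilon$ from $o_j$ to the strictly more preferred $o_{j+1}$, we obtain $q(i_j)\succ_{i_j}^{SD}p(i_j)$ (the cumulative mass strictly increases at the threshold $o_{j+1}$), while agents outside the cycle are unaffected. Thus $q$ SD-dominates $p$, so $p$ is not SD-efficient.

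I do not expect any deep obstacle here; the only thing requiring care is the bookkeeping verifying that $q$ is a legitimate assignment, and this is exactly where consistency of $M$ with $p$ is used: it guarantees that the entries $p(i_j)(o_j)$ being decremented are strictly positive, making the shift feasible for sufficiently small $\epsilon$. An alternative, even shorter, route would be to invoke the trading-cycle characterization of SD-efficiency of \citet[Lemma 3,][]{BoMo01a} directly: the trading cycle extracted from $M$ is, by constraints $(a)$ and $(b)$, a trading cycle consistent with $p$, which immediately yields that $p$ is not SD-efficient.
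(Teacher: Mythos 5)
Your proof is correct and takes essentially the same route as the paper: the paper likewise extracts a non-Pareto-optimal deterministic assignment consistent with the given one, obtains a trading cycle via Fact~\ref{fact:po:ttc}, and shifts $\epsilon$ probability mass around the cycle to produce an SD-dominating assignment, with the easy implication following directly from the definitions. Your version merely makes explicit the feasibility bookkeeping (preserved row/column sums, choice of $\epsilon$) and the appeal to Birkhoff's theorem that the paper leaves implicit.
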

\begin{proof}
	We first show that if an assignment is not ex post efficient, then it is not robust ex post efficient.
If an assignment $q$ is not ex post efficient, then there does not exist \emph{any} decomposition of the assignment into Pareto optimal deterministic assignments. Then $q$ is not robust ex post efficient.

	We now show that if an assignment is not robust ex post efficient, then it is not SD efficient.
If an assignment $q$ is not robust ex post efficient, then there exists at least one decomposition of $q$ into deterministic assignments in which at least one deterministic assignment is not Pareto optimal. But this implies that there exists at least one deterministic assignment $M$ consistent with $q$ that is not Pareto optimal. By Fact~\ref{fact:po:ttc}, $M$ admits a trading cycle $C$. Since $M$ is consistent with $q$, each object $o$ that points to an agent $i$ in the cycle is such that $q(i)(o)>0$.
Now consider a random assignment $p$ which is the same as $q$ except that each agent $i$ gets $\epsilon$ probability less of the object that was pointing to it and $\epsilon$ probability more of the object he points to in cycle $C$. Assignment $q$ is such that
$p(i) \succsim_i^{SD} q(i)$ for all $i\in N$ and $p(i) \succ_i^{SD} q(i)$ for all $i\in C$. Hence $q$ is not SD-efficient.
\end{proof}

	SD-efficiency is a strictly stronger concept than robust ex post efficiency. Although \citet{AbSo03a} did not explicitly define the concept robust ex post efficiency, they showed that a random assignment that has only one decomposition which is a randomization over Pareto optimal assignments does not satisfy SD-efficiency. Hence robust ex post efficiency does not imply SD-efficiency.
Next, we show that ex post efficiency does not imply robust ex post efficiency so that robust  ex post efficiency is a strictly stronger concept than ex post efficiency.

\begin{remark}\label{remark:expostnotrobust}
	Consider the assignment problem 
	assignment $p$ in the proof of Theorem~\ref{th:expost-not-combinatorial}. Assignment $p$ which is the result of random serial dictatorship
is ex post efficient. However, there is a deterministic assignment $M$ consistent with $p$ that is not Pareto optimal. Hence $p$ is not robust ex post efficient.
\end{remark}

Hence, we get the following.

\begin{theorem}
	The random serial dictatorship mechanism may return an assignment that is not robust ex post efficient.
\end{theorem}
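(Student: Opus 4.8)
The plan is to provide a concrete witness rather than an abstract argument, since the claim is merely an existential statement (``\emph{may} return''), and Remark~\ref{remark:expostnotrobust} together with Theorem~\ref{th:expost-not-combinatorial} already contains essentially everything needed. First I would recall the specific assignment problem and the assignment $p$ from the proof of Theorem~\ref{th:expost-not-combinatorial}: the four agents with preferences $1,2\colon o_1,o_2,o_3,o_4$ and $3,4\colon o_2,o_1,o_4,o_3$, and the matrix $p$ whose entries are $\nicefrac{5}{12}$ and $\nicefrac{1}{12}$. The key fact I would reuse is that $p$ is exactly the output of RSD on this profile (it is stated to be ``a result of RSD''), so it is a valid instance on which the RSD mechanism can be run and returns $p$.

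Next I would invoke Theorem~\ref{th:robust-eff-charac}, which characterizes robust ex post efficiency as the absence of any non-Pareto-optimal deterministic assignment consistent with $p$. So to show $p$ is not robust ex post efficient, it suffices to exhibit a single deterministic (permutation-matrix) assignment $M$ with $M(i)(o)=1$ only where $p(i)(o)>0$, and then check via Fact~\ref{fact:po:ttc} that $M$ admits a trading cycle. A natural candidate is the assignment giving $o_2$ to agent $1$, $o_4$ to agent $2$, $o_1$ to agent $3$, and $o_3$ to agent $4$: each of these entries is positive in $p$ (they are all $\nicefrac{5}{12}$ or $\nicefrac{1}{12}$), so $M$ is consistent with $p$. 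This $M$ is exactly the kind of ``bad'' assignment flagged in the proof of Theorem~\ref{th:expost-not-combinatorial}, where agents $1$ and $2$ receive $\{o_2,o_4\}$; that proof already observes such an allocation cannot be Pareto optimal, because some serial dictatorship lets agent $3$ or $4$ take $o_1$ before $o_2$ is allocated, equivalently there is a trading cycle (agent $1$ holding $o_2$ wants $o_1$, while agent $3$ holding $o_1$ prefers $o_2$).

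I would then conclude by chaining the pieces: since $M$ is a non-Pareto-optimal deterministic assignment consistent with $p$, Theorem~\ref{th:robust-eff-charac} yields that $p$ is not robust ex post efficient; since $p$ is the RSD output on this instance, RSD has returned an assignment that is not robust ex post efficient, which is precisely the statement. The main obstacle — and it is a minor one — is verifying that the chosen $M$ is genuinely non-Pareto-optimal rather than just suspicious; I would discharge this cleanly by exhibiting the explicit trading cycle and appealing to Fact~\ref{fact:po:ttc}, rather than relying on the informal serial-dictatorship phrasing. Everything else is bookkeeping, so the proof can be kept to a few lines essentially by citing Remark~\ref{remark:expostnotrobust} and Theorem~\ref{th:robust-eff-charac}.
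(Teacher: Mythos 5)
Your proposal is correct and takes essentially the same route as the paper, whose proof of this theorem is exactly Remark~\ref{remark:expostnotrobust}: take the RSD outcome $p$ from the proof of Theorem~\ref{th:expost-not-combinatorial}, note that some deterministic assignment consistent with $p$ is not Pareto optimal, and conclude via the characterization in Theorem~\ref{th:robust-eff-charac}. Your only addition is to make the witness explicit (the assignment giving $o_2,o_4,o_1,o_3$ to agents $1,2,3,4$, each entry positive in $p$, with the trading cycle between agents $1$ and $3$ certified by Fact~\ref{fact:po:ttc}), which the paper leaves implicit --- a sound tightening rather than a different argument.
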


The simple theorem above strengthens the observation of \citet{BoMo01a} that random serial dictatorship is not SD-efficient.
%

\begin{theorem}
Any robust ex post efficient point must lie on a face of the assignment polytope where all extreme points of the face are vectors of Pareto optimal assignments. 
\end{theorem}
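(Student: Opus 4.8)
The plan is to work with the minimal face of the assignment polytope containing $p$ and to show that its extreme points are precisely the deterministic assignments consistent with $p$; robust ex post efficiency will then force all of these to be Pareto optimal. First I would recall that the assignment polytope is the Birkhoff polytope of $n\times n$ doubly stochastic matrices, defined by the equality constraints (each row and each column sums to one) together with the nonnegativity constraints $p(i)(o)\ge 0$. By the Birkhoff--von Neumann theorem~\citep{Birk46a}, its extreme points are exactly the permutation matrices, i.e.\ the deterministic assignments.

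Next I would invoke the standard description of the minimal face $F$ containing a point $p$: it is cut out by imposing equality on those inequality constraints that are already tight at $p$. Here the only inequalities are the nonnegativity constraints, and those tight at $p$ are exactly the zero entries of $p$. Hence
\[
F = \{\, q \text{ doubly stochastic} : q(i)(o) = 0 \text{ whenever } p(i)(o) = 0 \,\},
\]
the set of assignments whose support is contained in the support of $p$. Since $F$ is a face of the Birkhoff polytope, its extreme points are exactly those permutation matrices lying in $F$, namely the permutation matrices whose support is inside the support of $p$ --- that is, precisely the deterministic assignments consistent with $p$ in the sense defined earlier.

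Finally I would combine this with the characterization in Theorem~\ref{th:robust-eff-charac}. If $p$ is robust ex post efficient, then $p$ admits no non-Pareto optimal deterministic assignment consistent with it, so every deterministic assignment consistent with $p$ is Pareto optimal. By the previous paragraph these are exactly the extreme points of $F$, and $p\in F$ by construction. Taking this minimal face $F$ therefore establishes the claim, since every extreme point of $F$ is a Pareto optimal assignment.

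The main obstacle is the second step: correctly identifying the minimal face with the support-preserving slice of the polytope and arguing that its extreme points are exactly the consistent permutation matrices. Everything else is a direct translation between the polyhedral language (faces, extreme points, tight constraints) and the combinatorial language (support, consistency, Pareto optimality), after which the earlier characterization does the real work.
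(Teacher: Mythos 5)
Your proof is correct, but it takes a genuinely different route from the paper's. You construct the witnessing face explicitly: the minimal face $F$ containing $p$ is cut out by the nonnegativity constraints tight at $p$, hence equals the support-preserving slice $\{\,q \text{ doubly stochastic} : q(i)(o)=0 \text{ whenever } p(i)(o)=0\,\}$, whose extreme points are exactly the permutation matrices consistent with $p$, and Theorem~\ref{th:robust-eff-charac} then forces all of them to be Pareto optimal. The paper instead argues by contradiction via a perturbation: if a robust ex post efficient point $x$ lay in the interior of the polytope, pick a non-Pareto-optimal extreme point $P$, choose $\delta$ so that $y = x + \delta(x-P)$ stays in an $\epsilon$-ball around $x$ inside the polytope, write $x = \frac{1}{1+\delta}\,y + \frac{\delta}{1+\delta}\,P$, decompose $y$ over vertices, and thereby exhibit a decomposition of $x$ that uses the Pareto-dominated assignment $P$ --- contradicting robust ex post efficiency straight from its definition, without passing through Theorem~\ref{th:robust-eff-charac}; the relative-interior-of-a-face case is then handled by saying ``the same argument'' applies. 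Your approach buys a sharper and more rigorous conclusion --- it identifies \emph{which} face works (the minimal one), handles all cases uniformly rather than sketching the on-a-face case, and needs no implicit assumption that a non-Pareto-optimal vertex exists (harmless in the paper, since otherwise the claim is trivial with the whole polytope as the face) --- at the cost of invoking the standard polyhedral facts that the minimal face is determined by the active constraints and that vertices of a face are the polytope's vertices lying in it, whereas the paper's argument is elementary and self-contained geometry.
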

\begin{proof}
Suppose not, then there is a robust ex post efficient point $x$ that is in the interior of the assignment polytope.  Then there is a $\epsilon$ ball centered around $x$ that is contained in the polytope. Let $p$ be an extreme point that is not Pareto optimal. Choose $\delta$ small enough such that 
$y = x + \delta(x-p)$ lies in the $\epsilon$ ball. Then one can write $x$ as a convex combination of $y$ and $p$. Moreover, since $y$ belongs to the assignment polytope, $y$ can in turn be written as a convex combination of vectors of assignments. Hence, we have expressed x as a convex combination of extreme points of the assignment polytope and one of which is not a Pareto optimal assignment. So $x$ cannot be robust ex post efficient. Hence, any robust ex post efficient must lie on a face of the assignment polytope. We can then use the same argument to say that all extreme points of the face must correspond to Pareto optimal assignments.
\end{proof}

We can consider similar computational questions regarding robust ex post efficiency: \emph{what is the computational complexity of checking whether an assignment is robust ex post efficient?} Our first observation is that the problem is in coNP.

\begin{remark}\label{remark:incoNP}
	The problem of checking whether a random assignment is robust ex post efficient is in coNP. By Theorem~\ref{th:robust-eff-charac},	any non-Pareto optimal deterministic assignment consistent with the random assignment is a witness that the random assignment is not robust ex post efficient. Also note that it can be checked in linear time whether a given assignment is Pareto optimal (Fact~\ref{fact:po:ttc}). 
\end{remark}
%

Due to the characterization of robust efficiency in Theorem~\ref{th:robust-eff-charac}, the problem of testing robust ex post efficiency is equivalent to  
checking whether there exists a constrained non Pareto optimal assignment. Previously, it has been shown that checking whether there exists a constrained Pareto optimal assignment is NP-complete~\citep{SaSe13a}. Next we give a simple necessary condition for robust ex post efficiency.



\begin{remark}
	If a random assignment is robust ex post efficient, there exists no consistent deterministic assignment in which no agent gets his most preferred object. 
	The argument is as follows.
	For a random assignment $p$, it is sufficient to show that if there exists a consistent deterministic assignment $q$ in which no agent gets his most preferred object, then that assignment is not Pareto optimal and hence $p$ is not robust ex post efficient. If no agent gets their most preferred object, then there exists no permutation over the agents under serial dictatorship returns $p$. Therefore, by Fact~\ref{fact:po:sd}, $q$ is not Pareto optimal. Hence by Theorem~\ref{th:robust-eff-charac}, $p$ is not robust ex post efficient.
\end{remark}

Next, we show that robust ex post efficiency is combinatorial. 

		\begin{theorem}\label{th:robust-is-combinatorial}
	Robust ex post efficiency is combinatorial.
		\end{theorem}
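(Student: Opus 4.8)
The plan is to derive the result almost immediately from the characterization in Theorem~\ref{th:robust-eff-charac}, which states that an assignment is robust ex post efficient if and only if it admits no non-Pareto optimal deterministic assignment consistent with it. The entire argument hinges on the observation that the notion of a deterministic assignment being \emph{consistent} with a random assignment is already a purely combinatorial, support-based condition.

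First I would fix two random assignments $p$ and $q$ with $q(i)(o)>0$ if and only if $p(i)(o)>0$, i.e. with identical zero/non-zero patterns. Recall that a deterministic assignment $M$ is consistent with $p$ precisely when, for every agent $i$, the unique object $o$ with $M(i)(o)=1$ satisfies $p(i)(o)>0$. This condition refers only to which entries of $p$ are strictly positive, never to their actual values. Hence, since $p$ and $q$ share the same support, a deterministic assignment $M$ is consistent with $p$ if and only if it is consistent with $q$; the two random assignments admit exactly the same family of consistent deterministic assignments.

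Next I would note that Pareto optimality of a deterministic assignment $M$ is a property of $M$ and the preference profile $\pref$ alone (for instance via the acyclicity of its corresponding graph, Fact~\ref{fact:po:ttc}); it is completely independent of $p$ or $q$. Combining the two observations, $p$ admits a non-Pareto optimal deterministic assignment consistent with it if and only if $q$ does. Applying Theorem~\ref{th:robust-eff-charac} to each side then yields that $p$ is robust ex post efficient if and only if $q$ is, which is exactly the statement that robust ex post efficiency is combinatorial.

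I expect no genuine obstacle here: once the characterization of Theorem~\ref{th:robust-eff-charac} is available, the only things to verify carefully are that consistency depends solely on the support and that Pareto optimality depends solely on the deterministic assignment, after which the equivalence is immediate. The point worth emphasizing is that this is precisely the feature ex post efficiency lacks (Theorem~\ref{th:expost-not-combinatorial}): there, the existence of a decomposition into Pareto optimal assignments can depend on the exact probability values rather than merely on the support, whereas the characterization of robust ex post efficiency quantifies over \emph{all} consistent deterministic assignments, a set fixed entirely by the support.
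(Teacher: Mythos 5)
Your proof is correct and takes essentially the same approach as the paper: both arguments reduce the claim to the characterization in Theorem~\ref{th:robust-eff-charac} together with the observation that consistency of a deterministic assignment with a random assignment depends only on which entries are non-zero, so $p$ and $q$ admit exactly the same family of (non-)Pareto optimal consistent deterministic assignments. If anything, your write-up is cleaner: the paper states the same idea contrapositively and contains a slip (``$r$ is not robust ex post efficient'' where $q$ is meant), while you make explicit the two facts doing the work, namely that consistency is support-based and that Pareto optimality of a deterministic assignment depends only on the preference profile.
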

		\begin{proof}
If $p$ is not robust ex post efficient, then there exists a deterministic assignment $r$ where $p(i)(o)>0$ for	$r(i)(o)=1$ and $r$ is not Pareto optimal. Hence for deterministic assignment $r$,   $q(i)(o)>0$ for	$r(i)(o)=1$.
Thus $r$ is not robust ex post efficient because it admits a Pareto dominated deterministic assignment that is consistent with it. The same argument also shows that is $q$ is not robust ex post efficient, then $p$ is not robust ex post efficient.
		\end{proof}

		In the previous section, we mentioned that the  complexity of testing whether the uniform assignment is ex post efficient is still open. On the other hand, it can be easily checked whether the uniform assignment is robust ex post efficient.

		\begin{remark}
		The uniform assignment is robust ex post efficient if and only if the preferences are unanimous. The arguments is as follows. 	If preferences are unanimous then every assignment is Pareto optimal. 
			If preferences are not unanimous, then for some two objects $o$ and $o'$, at least two agents have opposite preferences over them. In this case, any assignment that gives each of $o$ and $o'$ to one of the two agents who prefers it less is not Pareto optimal.
		\end{remark}
		
		Two agents are said to of the same \emph{type} if they have identical preferences. 
 We show that if there are a constant number agents, types, then robust ex post efficiency can be checked in polynomial time.

		\begin{lemma}\label{lemma:agent-types}
			If there is a trading cycle consistent with random assignment that contain multiple agents of the same type, then there also exists a Pareto cycle consistent with the random assignment, in which there is at most one agent of the same type.
		\end{lemma}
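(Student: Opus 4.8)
The plan is to show that any trading cycle consistent with $p$ that repeats a type can be \emph{shortcut} into a strictly shorter trading cycle consistent with $p$, and then iterate until no type repeats. Write the given cycle as $o_1 \to i_1 \to o_2 \to \cdots \to o_k \to i_k \to o_1$ (indices mod $k$), where the edge $o_j \to i_j$ records that $i_j$ holds a positive share of $o_j$ (so $p(i_j)(o_j)>0$, condition (a)) and the edge $i_j \to o_{j+1}$ records $o_{j+1} \succ_{i_j} o_j$ (condition (b)). Suppose $i_a$ and $i_b$ with $a<b$ have the same type, so $\succ_{i_a}$ and $\succ_{i_b}$ are a single strict order $\succ$. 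Since $o_a\neq o_b$ and $\succ$ is strict, exactly one of $o_a\succ o_b$ or $o_b\succ o_a$ holds, and this dichotomy drives the construction.

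First I would handle $o_a\succ o_b$ by re-routing $i_b$'s demand to form the forward cycle $o_{a+1}\to i_{a+1}\to\cdots\to o_b\to i_b\to o_{a+1}$, which omits $i_a$. Every holding edge here is inherited from the original cycle, so condition (a) is automatic; every demand edge except the new one $i_b\to o_{a+1}$ is inherited, and the new one is valid because $o_{a+1}\succ_{i_a}o_a$ together with $o_a\succ o_b$ gives $o_{a+1}\succ o_b$, i.e.\ $o_{a+1}\succ_{i_b}o_b$. The symmetric case $o_b\succ o_a$ re-routes $i_a$'s demand to form $o_{b+1}\to i_{b+1}\to\cdots\to o_a\to i_a\to o_{b+1}$, which omits $i_b$; here the new edge $i_a\to o_{b+1}$ is valid because $o_{b+1}\succ_{i_b}o_b$ and $o_b\succ o_a$ give $o_{b+1}\succ_{i_a}o_a$. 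In both cases the result is a trading cycle consistent with $p$ using one fewer agent.

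The step I expect to be the main obstacle is ruling out degeneracy: the shortcut could in principle collapse to a single-agent ``cycle'' (a self-loop, impossible since preferences are strict). This happens precisely when the two same-type agents sit adjacently along the contracted arc, i.e.\ $b=a+1$ in the first case or $b-a=k-1$ in the second. I would dispatch both by showing they contradict the case hypothesis: if $b=a+1$ then $o_{a+1}=o_b$, and the inherited edge $o_{a+1}\succ_{i_a}o_a$ forces $o_b\succ o_a$, incompatible with $o_a\succ o_b$; symmetrically $b-a=k-1$ forces $a=1,b=k$, and $o_{k+1}=o_1=o_a$ with $o_1\succ_{i_k}o_k$ forces $o_a\succ o_b$, incompatible with $o_b\succ o_a$. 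The same reasoning shows a same-type pair cannot occupy a $2$-cycle at all, so $k\ge 3$ whenever a type repeats; hence the shortcut always retains at least two agents and is a genuine trading cycle.

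Finally I would close by induction on cycle length: as long as some type repeats, one application strictly decreases the number of agents while preserving consistency with $p$, and the length is bounded below by $2$, so the process terminates in a trading cycle consistent with $p$ in which every type appears at most once.
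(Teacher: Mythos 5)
Your proposal is correct and takes essentially the same route as the paper's proof: both shortcut the cycle by letting the same-type agent whose held object is less preferred (in the shared strict order) inherit the demand edge of the other same-type agent, which is valid by transitivity, and then iterate until no type repeats. The only real difference is cosmetic --- the paper reroutes from the agent holding the \emph{least}-preferred object among all same-type agents, a choice that automatically excludes the self-loop degeneracy you rule out by explicit case analysis.
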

		\begin{proof}
			We show that if there exists a Pareto cycle also containing $c$ agents of the same type, then there exists a Pareto cycle containing $c-1$ agents of the same type.
			Consider the Pareto cycle in which the $c$ agents of the same type are $i_1,\ldots, i_c$ where agent $i_1$ is the agent who has the least preferred object pointing to it among all the objects that point towards the $c$ agents. Then agent $i_1$ can point directly to the object that agent $i_2$ points to thereby forming a smaller cycle only including $i_1,i_3,\ldots, i_c$.
		\end{proof}

		\begin{theorem}
			If there are a constant number of agent types, robust ex post efficiency can be checked in polynomial time. 
		\end{theorem}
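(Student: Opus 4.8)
The plan is to use Lemma~\ref{lemma:agent-types} to reduce the search for a Pareto-violating (non-Pareto optimal) consistent deterministic assignment to a search over cycles with a bounded number of \emph{distinct} agent types appearing as vertices. By Theorem~\ref{th:robust-eff-charac}, a random assignment $p$ fails to be robust ex post efficient if and only if it admits a non-Pareto optimal consistent deterministic assignment, and by Fact~\ref{fact:po:ttc} this is equivalent to the existence of a trading cycle consistent with $p$. So it suffices to decide, in polynomial time, whether any consistent trading cycle exists.

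First I would invoke Lemma~\ref{lemma:agent-types} repeatedly: if \emph{any} consistent trading cycle exists, then a consistent trading cycle exists in which each agent type appears at most once. Since there are only a constant number $\tau$ of agent types, such a reduced cycle has length at most $\tau$, i.e. bounded by a constant. Hence the entire question reduces to checking whether there is a consistent trading cycle of constant length. The key point is that a trading cycle is determined by the \emph{sequence of types} of its agents together with the objects they point to, and because each type now appears at most once, there are at most $\tau!$ choices for the cyclic order of types, a constant.

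Next I would describe how to test, for each fixed cyclic arrangement of a subset of types, whether the cycle can actually be instantiated consistently with $p$. Recall that a consistent trading cycle of size $2k$ requires, for each object $o$ pointing to agent $i$, that $p(i)(o)>0$, and for each agent $i$ pointing to object $o'$, that $o'\succ_i o$. Given a target cyclic sequence of types, I would select, for each position, a concrete agent of that type and a concrete object so that the positivity condition and the strict-preference condition are simultaneously met; because the types (and hence preference orders) are fixed and there are only $n$ agents and $n$ objects, this is a matter of checking polynomially many candidate (agent, object) assignments per position, and the number of positions is the constant cycle length. Thus each fixed arrangement can be tested in polynomial time, and there are only a constant number of arrangements, giving an overall polynomial-time algorithm. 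If no arrangement yields a consistent cycle, then by the lemma no consistent trading cycle exists at all, so $p$ is robust ex post efficient.

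The main obstacle I expect is making the ``instantiation'' step precise: turning the abstract type-level cycle into a genuine object-level trading cycle requires ensuring the objects chosen at consecutive positions are compatible (the object pointed to at one position is the object \emph{held} at the next), and that the strict-preference edges genuinely exist for the chosen agents. I would handle this by phrasing the search for a consistent cycle of bounded length as a bounded-depth search over the polynomially many (agent, object) pairs with $p(i)(o)>0$, pruned by the preference constraints; since both the depth (cycle length) and the number of type-orderings are constant, the total running time remains polynomial in $n$. Care must also be taken that the reduction in Lemma~\ref{lemma:agent-types} preserves consistency with $p$, but this follows because the shortcut edge introduced there points to an object already appearing in the original consistent cycle, so its positivity and preference conditions are inherited.
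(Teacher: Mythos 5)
There is a genuine gap, and it sits right at the start of your argument: the claimed equivalence between ``$p$ admits a non-Pareto optimal consistent deterministic assignment'' and ``there exists a trading cycle consistent with $p$'' is false in the direction your algorithm relies on. If a consistent deterministic assignment is not Pareto optimal, then by Fact~\ref{fact:po:ttc} its corresponding graph contains a trading cycle, and that cycle is consistent with $p$ --- that direction is fine. But the converse fails: a trading cycle consistent with $p$ only fixes $k$ agent--object pairs with positive probability; it need not extend to a \emph{full} deterministic assignment consistent with $p$, because the agents outside the cycle may not admit a perfect matching to the remaining objects using only positive entries of $p$ (e.g.\ some outside agent's row may be supported entirely on the cycle's objects). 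In fact, by \citet[Lemma 3,][]{BoMo01a} the existence of a consistent trading cycle characterizes failure of \emph{SD-efficiency}, which is strictly stronger than failure of robust ex post efficiency --- the paper itself notes, via the example of \citet{AbSo03a}, that there are robust ex post efficient assignments that are not SD-efficient. Your algorithm, which accepts iff no consistent trading cycle exists, would therefore wrongly declare exactly those assignments to be not robust ex post efficient; it is a polynomial-time test for SD-efficiency, not for robust ex post efficiency.

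The paper's proof closes this gap with an extra step that your proposal omits entirely: for each candidate short cycle (at most one agent per type, hence constantly many type-orderings and $O(n^{2k})$ instantiations, matching your enumeration), it additionally checks whether the agents \emph{outside} the cycle can be perfectly matched to the unallocated objects consistently with $p$, which is a polynomial-time bipartite matching computation. Only a cycle that passes this extension test witnesses, via Theorem~\ref{th:robust-eff-charac}, a non-Pareto optimal consistent deterministic assignment. Correspondingly, the shortcutting argument of Lemma~\ref{lemma:agent-types} must be checked to preserve not just consistency of the cycle but also the outside perfect matching; it does, because the agent $i_2$ and the object that pointed to it are removed from the cycle together and can be matched to each other (that object points to $i_2$, so $p(i_2)(o)>0$), but your proposal never raises this issue since it never considers the extension requirement at all. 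Repairing your argument thus requires changing the decision criterion itself, not merely tightening the instantiation step you flagged as the main obstacle.
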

		\begin{proof}
			We will use the characterization of robust efficiency in Theorem~\ref{th:robust-eff-charac} to propose an algorithm. Since we will check for a Pareto dominated assignment, by Fact~\ref{fact:po:ttc}, such an assignment admits a trading cycle consistent with the random assignment.
From Lemma~\ref{lemma:agent-types}, we know that if there is a trading cycle consistent with the random assignment then there is also a trading cycle consistent with the assignment in which there is at most one agent of one type. In order to check for robust efficiency, we need to check whether there is a trading cycle for which the agents outside the trading cycle are assigned an object each from outside the trading cycle consistent with the random assignment. We will use the fact that if there exists a trading cycle consistent with the assignment which contains multiple agents of the same type and if the agents outside the trading cycle are also perfectly matched to an object consistent with the random assignment, then there exists a trading cycle consistent with the assignment which contains at most one agent of each type and for which the agents outside the trading cycle are also perfectly matched to an object consistent with the random assignment.

Note that are constant number $\sum_{j=1}^{k}{k\choose j}\times j!$ of possible orderings in which agent types are present in a Pareto cycle in which there is at most one agent of the same type. For each of the agents types in the cycle, there may be $O(n)$ options of objects that point to him. For each agent type there are $O(n)$ agents that could be used for that type. Therefore, the number of Pareto cycles to be considered is at most $O(n^{2k})$ where in each Pareto cycle there are at most $k$ agents and $k$ objects.  
					
For each of the (short) Pareto cycles considered that are not more than $O(n^{2k})$, we need to check whether the other agents can be perfectly matched to the unallocated objects. This can again be checked in polynomial time via the algorithm to check whether a perfect matching exists for agents \emph{not} in the Pareto cycle. 
		\end{proof}
		
It will be interesting to check whether there exists a fixed parametrized algorithm with parameter number of agent types.

\section{Conclusions}

%
%
%
%


We examined different aspects of ex post efficiency of random assignments. 
One of the most important technical result in the paper is that testing ex post efficiency is NP-complete. 
The result contrast  with the followings facts
(1) ex post stability in the two sided marriage setting can be tested in polynomial time via an LP~\citep{TeSe98a};
(2) SD-efficiency can be tested in polynomial time; and
(3) an ex post efficient assignment can be computed in polynomial time.

 One implication of the NP-completeness result is that the set of ex post efficient assignments cannot be characterized compactly. Unless P=NP, there is no polynomial-time separation oracle for the convex hull of Pareto optimal assignments. Due to the well-known equivalence of optimization and separation, it follows that optimizing a linear function over the convex hull of Pareto optimal assignments is NP-complete as well.\footnote{The fact that optimizing a linear function over the convex hull of Pareto optimal assignments is NP-complete also follows from \citep{SaSe13a}.}
 Another corollary is that for the marriage market model, testing ex post effiency is NP-complete because the assignment setting can be viewed as a marriage market in which one side of the market is completely indifferent.
 
We also showed that in the random assignment problem, robust ex post efficiency is combinatorial whereas ex post efficiency is not.
The finding that ex post efficiency is not combinatorial contrasts with the fact that in randomized voting, ex post efficiency of a lottery simply depends on its support.

%

A number of open problems arise as a result of this study.
The complexity of checking whether a random assignment is robust ex post efficient is open. Similarly, for a constant number of agent types, the complexity of checking whether a random assignment is ex post efficient is also open. 
Computational aspects of Pareto optimal deterministic assignments have been studied in great depth in recent years~\citep{ACMM05a,ABH11c,Manl13a}. 
The more general randomized resource allocation settings provide a suitable ground for further developments in the algorithmic aspects of matching under preferences.




\section*{Acknowledgments}
NICTA is funded by the Australian Government through the Department of Communications and the Australian Research Council through the ICT Centre of Excellence Program.


\begin{thebibliography}{24}
\expandafter\ifx\csname natexlab\endcsname\relax\def\natexlab#1{#1}\fi
\expandafter\ifx\csname url\endcsname\relax
  \def\url#1{\texttt{#1}}\fi
\expandafter\ifx\csname urlprefix\endcsname\relax\def\urlprefix{URL }\fi

\bibitem[{Abdulkadiro{\u{g}}lu and S{\"o}nmez(1998)}]{AbSo98a}
Abdulkadiro{\u{g}}lu, A., S{\"o}nmez, T., 1998. Random serial dictatorship and
  the core from random endowments in house allocation problems. Econometrica
  66~(3), 689--702.

\bibitem[{Abdulkadiro\u{g}lu and S{\"o}nmez(2003)}]{AbSo03a}
Abdulkadiro\u{g}lu, A., S{\"o}nmez, T., 2003. Ordinal efficiency and dominated
  sets of assignments. Journal of Economic Theory 112~(1), 157--172.

\bibitem[{Abraham et~al.(2005)Abraham, Cechl{\'a}rov{\'a}, Manlove, and
  Mehlhorn}]{ACMM05a}
Abraham, D.~J., Cechl{\'a}rov{\'a}, K., Manlove, D., Mehlhorn, K., 2005. Pareto
  optimality in house allocation problems. In: Proceedings of the 16th
  International Symposium on Algorithms and Computation (ISAAC). Vol. 3341 of
  Lecture Notes in Computer Science (LNCS). pp. 1163--1175.

\bibitem[{Asinowski et~al.(2014)Asinowski, Keszegh, and Miltzow}]{ABM14b}
Asinowski, A., Keszegh, B., Miltzow, T., 2014. Counting houses of {P}areto
  optimal matchings in the house allocation problem. Tech. Rep.
  arXiv:1401.5354, arXiv.org.

\bibitem[{Athanassoglou(2011)}]{Atha11a}
Athanassoglou, S., 2011. Efficiency under a combination of ordinal and cardinal
  information on preferences. Journal of Mathematical Economics 47, 180--185.

\bibitem[{Aziz(2014)}]{Aziz14d}
Aziz, H., 2014. Random assignment with multi-unit demands. Tech. Rep.
  1401.7700, arXiv.org.

\bibitem[{Aziz et~al.(2014)Aziz, Brandl, and Brandt}]{ABB14a}
Aziz, H., Brandl, F., Brandt, F., 2014. Universal dominance and welfare for
  plausible utility functions. In: Conitzer, V., Easley, D. (Eds.), Proceedings
  of the 15th ACM Conference on Economics and Computation (ACM-EC).
  Forthcoming.

\bibitem[{Aziz et~al.(2013{\natexlab{a}})Aziz, Brandt, and Brill}]{ABB13b}
Aziz, H., Brandt, F., Brill, M., 2013{\natexlab{a}}. The computational
  complexity of random serial dictatorship. Economics Letters 121~(3),
  341--345.

\bibitem[{Aziz et~al.(2013{\natexlab{b}})Aziz, Brandt, and
  Harrenstein}]{ABH11c}
Aziz, H., Brandt, F., Harrenstein, P., 2013{\natexlab{b}}. Pareto optimality in
  coalition formation. Games and Economic Behavior 82, 562--581.

\bibitem[{Aziz and de~Keijzer(2012)}]{AzKe12a}
Aziz, H., de~Keijzer, B., 2012. Housing markets with indifferences: a tale of
  two mechanisms. In: Proceedings of the 26th AAAI Conference on Artificial
  Intelligence (AAAI). pp. 1249--1255.

\bibitem[{Bhalgat et~al.(2011)Bhalgat, Chakrabarty, and Khanna}]{BCK11a}
Bhalgat, A., Chakrabarty, D., Khanna, S., 2011. Social welfare in one-sided
  matching markets without money. In: Proceedings of APPROX-RANDOM. pp. 87--98.

\bibitem[{Birkhoff(1946)}]{Birk46a}
Birkhoff, G., 1946. Three observations on linear algebra. Univ. Nac. Tacuman
  Rev. Ser. A 5, 147---151.

\bibitem[{Bogomolnaia and Moulin(2001)}]{BoMo01a}
Bogomolnaia, A., Moulin, H., 2001. A new solution to the random assignment
  problem. Journal of Economic Theory 100~(2), 295--328.

\bibitem[{Budish et~al.(2013)Budish, Che, Kojima, and Milgrom}]{BCKM12a}
Budish, E., Che, Y.-K., Kojima, F., Milgrom, P., 2013. Designing random
  allocation mechanisms: {T}heory and applications. American Economic Review
  103~(2), 585--623.

\bibitem[{Chambers(2004)}]{Cham04a}
Chambers, C., 2004. Consistency in the probabilistic assignment model. Journal
  of Mathematical Economics 40, 953--962.

\bibitem[{Gr{\"o}tschel et~al.(1993)Gr{\"o}tschel, Lov{\'a}sz, and
  Schrijver}]{GLS93a}
Gr{\"o}tschel, M., Lov{\'a}sz, L., Schrijver, A., 1993. Geometric Algorithms
  and Combinatorial Optimization. Vol.~2 of Algorithms and Combinatorics.
  Springer.

\bibitem[{Katta and Sethuraman(2006)}]{KaSe06a}
Katta, A.-K., Sethuraman, J., 2006. A solution to the random assignment problem
  on the full preference domain. Journal of Economic Theory 131~(1), 231--250.

\bibitem[{Lov{\'a}sz and Plummer(2009)}]{LoPl09a}
Lov{\'a}sz, L., Plummer, M.~D., 2009. Matching Theory. AMS Chelsea Publishing.

\bibitem[{Manlove(2013)}]{Manl13a}
Manlove, D., 2013. Algorithmics of Matching Under Preferences. World Scientific
  Publishing Company.

\bibitem[{Moulin(2003)}]{Moul03a}
Moulin, H., 2003. Fair Division and Collective Welfare. The MIT Press.

\bibitem[{Roth et~al.(1993)Roth, Rothblum, and {Vande Vate}}]{RRV93a}
Roth, A.~E., Rothblum, U.~G., {Vande Vate}, J.~H., 1993. Stable matchings,
  optimal assignments, and linear programming. Mathematics of Operations
  Research 18~(4), 803--828 803--828 803--828 803--828 803--828.

\bibitem[{Saban and Sethuraman(2013)}]{SaSe13a}
Saban, D., Sethuraman, J., 2013. The complexity of computing the random
  priority allocation matrix. In: Proceedings of the 9th International Workshop
  on Internet and Network Economics (WINE). Lecture Notes in Computer Science
  (LNCS). http://www.columbia.edu/~js1353/pubs/rpcomplexity.pdf.

\bibitem[{Svensson(1999)}]{Sven99a}
Svensson, L.-G., 1999. Strategy-proof allocation of indivisible goods. Social
  Choice and Welfare 16~(4), 557--567.

\bibitem[{Teo and Sethuraman(1998)}]{TeSe98a}
Teo, C.-P., Sethuraman, J., 1998. The geometry of fractional stable matchings
  and its applications. Mathematics of Operations Research 23~(4), 874--891.

\end{thebibliography}

\end{document}